\DeclarePairedDelimiter{\norm}{\lVert}{\rVert}%
\DeclarePairedDelimiter{\bra}{\langle}{\rvert}%
\DeclarePairedDelimiter{\ket}{\lvert}{\rangle}%
\DeclarePairedDelimiter{\expval}{\langle}{\rangle}%
\DeclarePairedDelimiter{\abs}{\lvert}{\rvert}%
\DeclarePairedDelimiterX\innerp[2]{\langle}{\rangle}{#1\delimsize\vert\mathopen{}#2}%
\DeclarePairedDelimiterX\braket[2]{\langle}{\rangle}{#1\delimsize\vert\mathopen{}#2}%
\DeclarePairedDelimiterX\braketOP[3]{\langle}{\rangle}{#1\,\delimsize\vert\,\mathopen{}#2\,\delimsize\vert\,\mathopen{}#3}%
\DeclarePairedDelimiterX\ketbra[2]{\lvert}{\rvert}{#1\delimsize\rangle\!\delimsize\langle#2}%
\DeclarePairedDelimiterX\outerp[2]{\lvert}{\rvert}{#1\delimsize\rangle\!\delimsize\langle#2}%
\DeclarePairedDelimiterX\projector[1]{\lvert}{\rvert}{#1\delimsize\rangle\!\delimsize\langle#1}%
\DeclareMathOperator{\tr}{tr}%
\newtheorem{theorem}{Theorem}
\newtheorem{corollary}{Corollary}
\newtheorem{lemma}{Lemma}
\newtheorem{conjecture}{Conjecture}
\theoremstyle{definition}
\newtheorem{definition}{Definition}
\DeclareMathOperator{\poly}{poly}
\begin{document}
\title{Efficient quantum tomography of a polynomial subspace}

\author{Yat Wong}
\email{yatwong@uchicago.edu}
\affiliation{Pritzker School of Molecular Engineering, University of Chicago, Chicago, Illinois 60637, USA}

\author{Ming Yuan}
\affiliation{Pritzker School of Molecular Engineering, University of Chicago, Chicago, Illinois 60637, USA}

\author{Kevin He}
\affiliation{James Franck Institute, University of Chicago, Chicago, Illinois 60637, USA}
\affiliation{Department of Physics, University of Chicago, Chicago, Illinois 60637, USA}

\author{Srivatsan Chakram}
\affiliation{Department of Physics and Astronomy, Rutgers University, Piscataway, NJ 08854, USA}

\author{Alireza Seif}
\altaffiliation{Present address: IBM Thomas J. Watson Research Center, Yorktown Heights, NY 10598, USA}
\affiliation{Pritzker School of Molecular Engineering, University of Chicago, Chicago, Illinois 60637, USA}

\author{David I. Schuster}
\affiliation{Pritzker School of Molecular Engineering, University of Chicago, Chicago, Illinois 60637, USA}
\affiliation{James Franck Institute, University of Chicago, Chicago, Illinois 60637, USA}
\affiliation{Department of Physics, University of Chicago, Chicago, Illinois 60637, USA}
\affiliation{Department of Applied Physics, Stanford University, Stanford, California 94305, USA}

\author{Liang Jiang}
\affiliation{Pritzker School of Molecular Engineering, University of Chicago, Chicago, Illinois 60637, USA}

\date{\today}

\begin{abstract}
Quantum tomography is crucial for characterizing the quantum states of multipartite systems, but its practicality is often limited by the exponentially large dimension of the Hilbert space. Most existing approaches, such as compressed sensing and tensor network-based tomography,  impose structural constraints on the state to enable more resource-efficient characterization. However, not all physical states can be well-approximated with highly structured states. Here, we develop a partial quantum tomography method based on direct fidelity estimation (DFE) that focuses on a neighborhood subspace---the subspace spanned by states physically close to a given target state. Using this generalized DFE method, we estimate elements of the density operator within this subspace in a self-verifying manner. We investigate the efficiency of this approach under different sets of available measurements for various states and find that the set of available measurements significantly impacts the cost of DFE. For example, we show that Pauli measurements alone are insufficient for performing efficient DFE on all product states, whereas the full set of product measurements is sufficient. This method can be applied in many situations, including characterizing quantum systems with confined dynamics and verifying preparations of quantum states and processes.
\end{abstract}

\maketitle
\section{Introduction}

Due to the exponentially large dimension of its Hilbert space, a complete characterization of an arbitrary state in a multipartite quantum system is intractable: for an $m$-partite system, with each part having $n$ local dimensions, the Hilbert space dimension is $n^m$. In addition, noise during evolution can decohere the quantum system and lead to a mixed state in the end. Therefore, we must generally use density matrices to describe the states, further increasing the number of parameters that must be estimated. Hence, a full tomography requires determining $n^{2m}-1$ real parameters, which means that the number of identical copies of the system needed to be prepared grows exponentially as the number of parties increases. Therefore, the tomography task is usually impractical for large $m$ due to resource consumption. 

To be more resource-efficient for state characterization, additional constraints on the states' form or the system's structure are needed. For example, if the density matrices have a low rank, compressed sensing techniques can be applied to reduce sample complexity ~\cite{gross_quantum_2010,flammia_quantum_2012}. Efficient methods to do tomography using matrix product states (MPS) \cite{cramer_efficient_2010, lanyon_efficient_2017} or matrix product operators (MPO) \cite{baumgratz_scalable_2013, baumgratz_scalable_2013-1}, which are often related to 1D systems with local interaction, have also been developed. It is also possible to use an adaptive method \cite{huszar_adaptive_2012, quek_adaptive_2021} or even machine learning protocols \cite{torlai_neural-network_2018, tiunov_experimental_2020} to optimize the measurement basis chosen.

There are also approaches that avoid tomography completely: If preparation of the desired state is assumed to be as accurate as possible, one can apply direct fidelity estimation (DFE) with the ideal state, using an importance sampling rule~\cite{flammia_direct_2011, da_silva_practical_2011} to verify the state rather than performing a full tomography. However, encountering vanishing weights in the ``importance-weighting" rule introduced in previous works can lead to a significant increase in sampling overhead. Addressing this issue requires the use of cutoffs, which in turn introduces systematic errors. Moreover, previous methods are limited to estimating the overlap with pure states.

In many situations, the physical state is expected to exist in a confined subspace whose dimension scales polynomially rather than exponentially with the number of parties. For example, the state could be in a subspace where engineered dissipation or a blockade Hamiltonian constrains the dynamics~\cite{albert_pair-cat_2019,chakram_multimode_2022, yuan2023universalcontrolbosonicsystems}. Additionally, if the state preparation noise is small, one would expect the resultant state to be within a neighborhood of the target state, which we formalize later. However, states in such a subspace could be highly entangled and difficult to characterize or verify. 

To address these challenges, we establish a theoretical framework for performing tomography based on DFE in a polynomial-dimensional neighborhood subspace called Direction Extraction of Density Matrix Elements from Subspace Sampling Tomography (DEMESST). Here, a neighborhood subspace refers to a subspace where all basis states are ``close" to a certain pure state via some easy-to-implement unitaries. As a consequence of being based on DFE, DEMESST can \emph{self-verify} this polynomial neighborhood subspace assumption by measuring the population of the physical state in the target subspace. A simple version of DEMESST was recently demonstrated by He et al. \cite{He2024}, and here we present the details and generalizations that go beyond that work.

\begin{table*}[t]
\begin{tabular}{|c||*{4}{c|}}
\hline
\diagbox{Measurements}{Maximum DFE Cost}{States} & \makecell{Stabilizer \\ State \\ (Corollary \ref{cr:2})} & \makecell{Product \\ State \\ (Corollary \ref{cr:4})} & \makecell{Matrix Product State \\ (bond dimension $k=O(1)$) \\ (Corollary \ref{cr:5})} & \makecell{Arbitrary Pure \\ State \\ (Corollary~\ref{cr:1})} \\ \hline & \multicolumn{4}{c|}{} \\[-3mm] \hline
Pauli Measurements & \multirow{4}{*}[-3mm]{$\leq2-2^{1-m}$} & \multicolumn{3}{c|}{$2^{\Theta(m)}$} \\ \cline{0-0} \cline{3-5}
Product Measurements &&& \multicolumn{2}{c|}{} \\ \cline{0-0} \cline{4-4}
\makecell{Product Measurements with \\ $O(m)$ Quasilocal Gates and \\ $O(1)$ Ancilla Qubits} && \multicolumn{2}{c|}{$1$} & \textcolor{blue}{$2^{O(n)}$}\\ \cline{0-0} \cline{5-5}
Arbitrary Measurements && \multicolumn{3}{c|}{} \\ \hline & \multicolumn{4}{c|}{} \\[-3mm] \hline
LOCC (if conjecture \ref{cj:1} holds) & \multicolumn{4}{c|}{\textcolor{blue}{$\leq 2-2^{1-m}$}} \\ \hline
\end{tabular}
\caption{Summary of DFE cost $Z_\mathcal{M}$ of multiqubit states with sets of measurements $\mathcal{M}$. A value of $Z_\mathcal{M}=1$ indicates that the projection of the state can be implemented perfectly within $\mathcal{M}$, while $2-2^{1-m}$ is usually achieved by a set of traceless measurements such that the (weighted) average is proportional to the projector of the state, up to a constant deviation. Blue entries indicate that we do not have tight proven bounds yet: $2^{O(n)}$ is just a trivial upper bound, while $2-2^{1-m}$ for arbitrary pure state with LOCC measurements depends on a conjecture. Note that a state do not need to be strictly in a category for it to be in the neighbourhood of such category, refer to Section \ref{sec:2} and Theorem \ref{th:1} for details. \label{table:1}}

\end{table*}
\begin{table*}[t]
\begin{tabular}{|c||*{5}{c|}}
\hline
\diagbox[innerwidth=13em,height=4\line]{Measurements}{Maximum \\ DFE Cost}{States} & \makecell{GKP \\ State \\ (Corollary \ref{cr:3})} & \makecell{Coherent \\ State} & \makecell{Product \\ State \\ (Corollary \ref{cr:4})}& \makecell{Pure \\ Gaussian \\ State \\ (Corollary \ref{cr:6})}& \makecell{Arbitrary Pure \\ State \\ (Corollary~\ref{cr:1})}\\ \hhline{|=#=:=:=:=:=|}
\makecell{Parity Measurements with \\ Displacement} & \multirow{2}{*}[-4mm]{$2$} & \multicolumn{1}{c}{} & \multicolumn{3}{c|}{\multirow{2}{*}{$2^{\Omega(m)}$}}\\ \cline{0-0} \cline{3-3}
\makecell{Parity Measurements and \\ Vacuum Projections with \\ Displacements} &&&\multicolumn{3}{c|}{}\\ \cline{0-0} \cline{4-6}
Product Measurements && \multicolumn{2}{c|}{} & \multicolumn{2}{c|}{} \\ \cline{0-1} \cline{5-5}
\makecell{Product Measurements with \\ Gaussian Unitaries} & \multicolumn{4}{c|}{$1$} & \textcolor{blue}{$?$}\\ \cline{0-0} \cline{6-6}
Arbitrary Measurements & \multicolumn{5}{c|}{} \\ \hhline{|=#=:=:=:=:=|}
LOCC (if conjecture \ref{cj:1} holds) & \multicolumn{5}{c|}{\textcolor{blue}{$\leq2$}} \\ \hline
\end{tabular}
\caption{Summary of DFE cost of multimode bosonic states. $1$ indicates that the projector can be achieved directly, while $2$ is usually achieved with a (weighted) average of traceless measurements. Once again, blue entries indicate we do not have tight proven bounds yet.\label{table:2}}

\end{table*}

Under such a framework, the set of available measurements decides what states' neighborhoods permit efficient subspace tomography. For example, as we show later, there is an exponential gap in sampling overhead between Pauli measurements and general product measurements when performing tomography of a neighborhood subspace of a tensor product of magic states.

In the following, we first define the polynomial neighborhood space, which is spanned by states that differ from a pure state by some physically relevant operations. We also establish Generalized DFE to address issues of the existing method of DFE: while previous proposals were limited to pure state projections, our method is applicable to arbitrary Hermitian operators. Additionally, we introduce a sampling scheme that does not suffer from the small denominator issue that stems from vanishing weights in the existing schemes. Our method improves the variance per measurement and goes beyond the previously considered Pauli and Wigner measurements~\cite{flammia_direct_2011, da_silva_practical_2011}. After that, we determine a sufficient condition for performing tomography within the subspace efficiently using DEMESST. We explore sets of measurements that permit efficient DFE of certain classes of states. Finally, we identify states with exponential gaps in the DFE overhead between similar sets of measurements, such as Pauli measurements and product measurements. 
\section{Summary of results\label{sec:2}}
The results are summarized in TABLES \ref{table:1} \& \ref{table:2}, representing the maximum DFE overhead of certain families of states with certain measurements. Each column represents the base state of a neighborhood subspace, while each row represents the available measurements. Diagonal entries are proved in the following sections as corollaries. If an entry is constant, then the neighborhood generated from a base state of that category with local operators allows efficient tomography with the corresponding set of measurements. Note that a state $\sigma$ does not need to be in the category itself to allow efficient neighborhood tomography. For example, a $W$ state itself is not a product nor a stabilizer state. However, being the equal superposition of the states resulting from applying $X$ of different qubits on $\ket{0}^{\otimes m}$, it is considered to be in the neighborhood of the $\ket{0}^{\otimes m}$ state. Since $\ket{0}^{\otimes m}$ is a product state and a stabilizer state, it has a constant overhead, as shown in the table. Hence, the DFE overhead of a $W$ state is at most polynomial in $m$ for all listed measurement sets.
The second column of TABLE \ref{table:2} has been experimentally demonstrated by performing tomography in a bounded photon subspace in up to four modes of a multimode bosonic system~\cite{He2024}.
\section{Polynomial neighborhood subspace}
In many situations, we expect the state prepared in an experiment to only slightly differ from a target pure state. To formalize this, we define the neighborhood subspace of a pure state as follows.
\begin{definition}
Given a pure state $\ket{\psi}$ and a set of operators $\mathcal{K}$, define the neighborhood of $\ket{\psi}$ generated by $\mathcal{K}$ $k$ times iteratively as
\begin{equation}
N_0(\ket{\psi},\mathcal{K})=\mathop{span}(\{\ket{\psi}\}),
\end{equation}
\begin{equation}
\begin{split}
N_{k+1}(\ket{\psi},\mathcal{K})=\mathop{span}(&\left\{K\ket{\psi'}\right|K\in\{I\}\cup\mathcal{K},\\
&\left.\ket{\psi'}\in N_k(\ket{\psi},\mathcal{K})\right\}).
\end{split}
\end{equation}
\end{definition}
For example, if we prepare some state $\ket{\psi}$ and allow it to evolve under some Lindbladian $\mathcal{L}(\rho)=\gamma\sum_i(L_i\rho L_i^\dagger-\frac{1}{2}\{L_i^\dagger L_i,\rho\})$ for a short time $t\ll\gamma^{-1}$, we would expect most of the state to end up within $N_k(\ket{\psi},\cup_i\{L_i, L_i^\dagger L_i\})$, i.e.
$$\tr \left(\Pi_k\rho_t\Pi_k\right)=1-O\left((\gamma t)^{k+1}\right),$$
where $\Pi_k$ is the projection operator onto $N_k$. Such spaces have dimensions polynomial in $\abs{\mathcal{K}}$ regardless of the dimension of the complete Hilbert space:
\begin{lemma}
The dimension of the neighborhood generated from a pure state $\ket{\psi}$ (``base state") with at most $k$ applications of a combination of operators from $\mathcal{K}$ is polynomial in $\abs{\mathcal{K}}$ if $k=O(1)$.
\end{lemma}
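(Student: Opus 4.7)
The plan is to prove the bound by a direct counting argument on a spanning set for $N_k(\ket{\psi},\mathcal{K})$, obtained by unrolling the recursive definition. First I would observe that, by induction on $k$, every vector in $N_k(\ket{\psi},\mathcal{K})$ is a linear combination of states of the form
\begin{equation}
K_{i_k}K_{i_{k-1}}\cdots K_{i_1}\ket{\psi},\qquad K_{i_j}\in\{I\}\cup\mathcal{K}.
\end{equation}
The base case $k=0$ is immediate since $N_0=\mathop{span}(\{\ket{\psi}\})$, and the inductive step follows by applying each allowed $K\in\{I\}\cup\mathcal{K}$ to a spanning set of $N_k$ and taking spans, which exactly reproduces the definition of $N_{k+1}$. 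Including the identity in the generating set is crucial here: it ensures that strings of length strictly less than $k$ are also captured, so I do not need to enumerate depths separately.

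Next I would count the number of such length-$k$ strings. There are at most $(|\mathcal{K}|+1)^k$ distinct ordered sequences of operators drawn from $\{I\}\cup\mathcal{K}$, hence at most that many spanning vectors. Since the dimension of a subspace is bounded above by the cardinality of any spanning set, I obtain
\begin{equation}
\dim N_k(\ket{\psi},\mathcal{K})\;\leq\;(|\mathcal{K}|+1)^{k}.
\end{equation}
For $k=O(1)$, the right-hand side is $O(|\mathcal{K}|^{k})$, which is polynomial in $|\mathcal{K}|$ as claimed. Crucially, this bound is independent of the ambient Hilbert space dimension, which is the whole point of the lemma.

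There is essentially no hard step: the argument is a straightforward induction followed by a counting estimate. The only subtlety worth flagging is the role of $I\in\{I\}\cup\mathcal{K}$ in the recursion, which guarantees $N_k\subseteq N_{k+1}$ and lets me treat ``at most $k$ applications'' as ``exactly $k$ factors, some of which may be identities.'' Without this observation, one would be tempted to sum $\sum_{j=0}^{k}|\mathcal{K}|^{j}$, which gives the same asymptotic scaling but slightly obscures the simple spanning-set picture.
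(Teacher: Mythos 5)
Your argument is correct and is essentially the same as the paper's: the paper bounds $\dim N_k \leq (\abs{\mathcal{K}}+1)\dim N_{k-1}$ recursively, which is precisely your spanning-set count $(\abs{\mathcal{K}}+1)^k$ unrolled. No gaps.
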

\begin{proof}
\begin{equation}
\dim\left[N_0(\ket{\psi},\mathcal{K})\right]=1,
\end{equation}
\begin{equation}
\begin{split}
\dim\left[N_k(\ket{\psi},\mathcal{K})\right]&\leq(\abs{\mathcal{K}}+1)\dim\left[N_{k-1}(\ket{\psi},\mathcal{K})\right]\\
&\leq(\abs{\mathcal{K}}+1)^k.
\end{split}
\end{equation}
\end{proof}
Hence, the exponential dimension of the entire Hilbert space does not inhibit partial tomography of the relevant subspace. As an example, the tomography of any polynomial dimensional subspace $\mathcal{S}$, which can be seen as a neighborhood subspace generated from a pure state $\ket{\psi}\in\mathcal{S}$ with $\dim(\mathcal{S})-1$ different unitaries, is possible if arbitrary measurements are allowed:
\begin{corollary}
\label{cr:1}
If arbitrary (multipartite) measurements are allowed, tomography of any subspace $\mathcal{S}$ within $\epsilon$ Frobenius distance with at least $1-\delta$ success probability can be performed in $\poly(\dim{\mathcal{S}},\epsilon,\ln\delta^{-1})$ measurements.
\end{corollary}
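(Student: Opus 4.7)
The plan is to reduce the subspace tomography task to estimating $O(d^2)$ matrix elements of $\rho$ in an orthonormal basis of $\mathcal{S}$, where $d=\dim\mathcal{S}$, and to show that each of these elements can be estimated directly when arbitrary measurements are available. Concretely, first I would fix an orthonormal basis $\{\ket{\psi_i}\}_{i=1}^d$ of $\mathcal{S}$ and write the projected density matrix $\Pi_\mathcal{S}\rho\Pi_\mathcal{S}$ in terms of the $d^2$ real parameters $\{\braketOP{\psi_i}{\rho}{\psi_i}\}_i$ together with the real and imaginary parts of the off-diagonal entries $\{\braketOP{\psi_i}{\rho}{\psi_j}\}_{i<j}$.

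Next, I would exhibit, for each matrix element, a bounded Hermitian observable measurable in one shot when arbitrary measurements are allowed. The diagonals are $\braketOP{\psi_i}{\rho}{\psi_i}=\tr(\projector{\psi_i}\rho)$, estimated by measuring the rank-one projector $\projector{\psi_i}$. For the off-diagonals I would use the standard polarization identity: the projectors $\ketbra{\phi_{ij}^\pm}{\phi_{ij}^\pm}$ with $\ket{\phi_{ij}^\pm}=(\ket{\psi_i}\pm\ket{\psi_j})/\sqrt{2}$ recover $\operatorname{Re}\braketOP{\psi_i}{\rho}{\psi_j}$ via their difference, and the analogous projectors with relative phase $\pm i$ recover the imaginary part. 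Each such projector-valued measurement produces an outcome in $\{0,1\}$, so Hoeffding's inequality bounds the number of samples needed to estimate a single entry to additive precision $\eta$ with confidence $1-\delta'$ by $O(\eta^{-2}\log(1/\delta'))$. This is exactly the $Z_\mathcal{M}=1$ situation anticipated in Tables \ref{table:1}--\ref{table:2} for the ``arbitrary measurements'' row, and fits straightforwardly into the generalized DFE framework.

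Finally, I would convert these per-entry guarantees into a Frobenius-distance guarantee on the full matrix. Since there are at most $d^2$ independent real parameters, achieving $\|\hat\rho_\mathcal{S}-\Pi_\mathcal{S}\rho\Pi_\mathcal{S}\|_F\le\epsilon$ only requires that each entry be estimated to precision $\eta=\epsilon/d$. Taking a union bound by setting $\delta'=\delta/d^2$ and summing the per-entry sample complexities gives a total budget of $O\bigl(d^2\cdot(d/\epsilon)^2\log(d^2/\delta)\bigr)=\poly(d,\epsilon^{-1},\ln\delta^{-1})$ measurements, which is the claimed scaling.

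The only nontrivial ingredient is choosing the right decomposition so that every observable being estimated is implementable as a single arbitrary measurement with bounded outcomes; once the basis-projector plus polarization-identity decomposition is in place, the rest is a routine Hoeffding-plus-union-bound calculation. I do not anticipate a serious obstacle beyond verifying that the constants in this error propagation are indeed polynomial and that the resulting estimator, after symmetrization, lies in a valid Hermitian form (which can be enforced at post-processing without affecting the sample complexity).
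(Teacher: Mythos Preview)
Your proposal is correct and is essentially the same argument the paper gives: fix an orthonormal basis of $\mathcal{S}$, treat the system as a $(\dim\mathcal{S})$-level qudit, and perform standard tomography of the projected state. The paper's proof is a one-sentence appeal to ``standard tomography''; your projector-plus-polarization-identity measurements and Hoeffding/union-bound analysis simply spell out what that standard procedure is.
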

\begin{proof}
By choosing an orthonormal basis $\{\ket{\psi_l}\}$ of $\mathcal{S}$, we only need to treat the system as a $(\dim{\mathcal{S}})$-level qudit and perform standard tomography to retrieve the projected density operator. Note that the resultant operator should not be renormalized to unit trace because the physical state may not entirely lie within this subspace, and this procedure only retrieves the projected state. This is listed as the final column in TABLES \ref{table:1} \& \ref{table:2}.
\end{proof}
However, arbitrary measurements are hard to implement. Therefore, in the following, we present a method to realize this tomography with available measurements.

\section{Generalized direct fidelity estimation\label{sec:gdfe}}
To build a method for performing tomography in some $d$-dimensional subspace, which corresponds to estimating the expectations of a linearly independent set of $d^2$ Hermitian operators, we need the capability to effectively estimate the expectation of an arbitrary Hermitian operator in the subspace. DFE can be regarded as a special case with $d=1$, where the overlap of two states is sampled with certain families of measurements, such as Pauli measurements and Wigner measurements. For example, one can write the overlap of two states as the inner product of the vectors of expectation values of such measurable operators and estimate it by sampling the ratio of expected value with the reference value squared as the probability function ~\cite{flammia_direct_2011, da_silva_practical_2011}. In particular, to obtain the overlap between some known $m$-qubit pure state $\sigma=\ketbra{\psi}{\psi}$ and some unknown $m$-qubit physical state $\rho$, one can perform DFE to estimate $\tr{(\rho\sigma)}=2^{-m}\sum_{P\in\mathcal{P}}\tr{(P\rho)}\tr{(P\sigma)}=\sum_{P\in\mathcal{P}}p_P\frac{\tr{(P\rho)}}{\tr{(P\sigma)}}$, where $\mathcal{P}=\left\{I,X,Y,Z\right\}^{\otimes m}$ is a subset of the Pauli group, and $p_P=2^{-m}\left(\tr{(P\sigma)}\right)^2$. Note that $p_P$ does not depend on $\rho$ and is determined before the measurements. Such methods suffer from statistical and/or systematic errors induced by small denominators when some of the reference values have small magnitudes.

To address the vanishing denominator problem, we propose optimizing the choice of sampling probability and adjusting the weight of each measurement accordingly. For simplicity, we assume that the operator we measure is Hermitian, and all possible measurement operators have only measurement results of $\pm1$. We can express the expectation value of any Hermitian operator $\mathcal{O}$ within the subspace spanned by the identity and the available measurement operators, as a weighted sum of the expectation values of measurements: 
\begin{equation}
\expval{\mathcal{O}}_\rho=C(\mathcal{O})+\sum_i f_i(\mathcal{O})\tr(M_i\rho),
\end{equation}
where $M_i$ is the $i^{th}$ available measurement, $f_i$ and $C$ are real coefficients dependent on $\mathcal{O}$ and the set of measurement operators, and $\tr(M_i\rho)$ are the expectation values of measurement. In DFE, $\expval{\mathcal{O}}_\rho$ is estimated as
\begin{equation}
\mathbb{E}_i\left[w_i\expval{M_i}_\rho\right]=\sum_ip_iw_i\tr(M_i\rho),
\end{equation}
where $p_i$ is the probability of performing measurement $M_i$ and $w_i$ is the weight assigned to the measurement. If $p_iw_i=f_i$, the weighted mean is an unbiased estimator of the overlap, i.e. if we sample a random variable $X$ where we randomly select an $i$, each with a probability of $p_i$, measure $M_i$, and assign the measurement outcome multiplied by $w_i$ to $X$, then we have
\begin{equation}
\mathbb{E}\left[X\right]=\expval{\mathcal{O}-C}_\rho.
\end{equation}
For pure reference states $\mathcal{O}=\sigma$, traditional DFE methods~\cite{flammia_direct_2011, da_silva_practical_2011} have $C=0$, and use $p_i=f_i(\sigma)\tr(M_i\sigma)$ and $w_i=\left(\tr(M_i\sigma)\right)^{-1}$. This formulation of $p_i$ and $w_i$ utilizes the fact that $\sum_if_i(\sigma)\tr(M_i\sigma)=\tr(\sigma^2)=1$, which implies $p_i$ is a probability distribution. However, as discussed previously, this suffers from potentially divergent fractions, and the application of a cut-off to avoid statistical errors from small denominators induces a systematic error.

In our generalized DFE method, we minimize variance per measurement:
\begin{equation}
\begin{split}
& \text{Var}\left[X\right]=\mathbb{E}_i\left[w_i^2\right]-\expval{\mathcal{O}-C}_\rho^2=\sum_ip_i\abs{w_i}^2-\expval{\mathcal{O}-C}_\rho^2\\
= & \sum_ip_i \abs{w_i}^2 \sum_j p_j-\expval{\mathcal{O}-C}_\rho^2 \geq \left(\sum_i\abs{f_i}\right)^2-\expval{\mathcal{O}-C}_\rho^2.
\end{split}
\end{equation}
Due to the Cauchy-Schwarz inequality, the variance is minimized when the equality is satisfied, which requires $p_i\abs{w_i}^2\propto p_i$. Therefore, all $w_i$ must have the same magnitude, thus $p_iw_i=f_i$ implies $p_i\propto\abs{f_i}$. Hence, we construct the following distribution:
\begin{equation}
\left\{\begin{aligned}
& Z(\mathcal{O})=2\sum_j\abs{f_j(\mathcal{O})},\\
& w_i(\mathcal{O})=\frac{1}{2}\text{sgn}[f_i(\mathcal{O})]Z(\mathcal{O}),\\
& p_i(\mathcal{O})=\frac{2\abs{f_i(\mathcal{O})}}{Z(\mathcal{O})},\end{aligned}\right.
\end{equation}
where $Z$, twice the sum of all weights, is effectively the overhead of this procedure, as the variance per measurement is upper bounded by $Z^2/4$.Since such a distribution eliminates the possibility of small denominators, we eliminate the systematic errors generated by introducing a cutoff for handling small denominators in existing DFE methods. Note that while this approach attains minimal sample complexity over all possible parametrizations of $p_i$ and $w_i$ that satisfy $p_iw_i=f_i$, it does not rule out instances of states with exponential sample complexity.
Since the measurement outcomes can only be $\pm1$, Hoeffding's inequality provides us a bound on the sampling complexity that is polynomial in $Z$, the inverse of tolerable error $\epsilon$, and the logarithm of the failure rate $\delta$.

In principle, we can relax the measurement outcomes to arbitrary real numbers and consider implementations of measurements with POVMs. Note that with such relaxation, the sampling distribution is no longer guaranteed to be optimal in variance per measurement, as the proof relies on the property that the square of any measurement outcome is 1. If $M_i$ is implemented with POVM $\{\Lambda_j^{(i)}\}$, then we have
$$M_i=\sum_j\lambda_j^{(i)}\Lambda_j^{(i)},$$
where $\lambda_j^{(i)}$ indicates the measurement result assigned to $\Lambda_j^{(i)}$. Since there are multiple ways to write an operator as a weighted sum of measurements, we define $Z_\mathcal{M}(\mathcal{O})$, the ``$\mathcal{M}$-DFE scale factor", to be the minimal $Z$ over all possible ways of writing $\mathcal{O}$ as a weighted sum of measurements in $\mathcal{M}$. This is formally defined as the following:
\begin{equation}
\begin{aligned}
Z_\mathcal{M}(\mathcal{O}) &=\min_{\{M_i\}\subseteq\mathcal{M}}Z(\mathcal{O})\\
&=\min_{\substack{\{M_i\}\subseteq\mathcal{M},\{f_i\},C\\\text{s.t.}\sum_if_iM_i=\mathcal{O}-CI}}\sum_i\left(\lambda_{\max}^{(i)}-\lambda_{\min}^{(i)}\right)\abs{f_i},
\end{aligned}
\end{equation}
where $\lambda_{\max}^{(i)}$ and $\lambda_{\min}^{(i)}$ indicate the maximum and minimum of measurement result assignments of the POVM $\left\{\Lambda_j^{(i)}\right\}$ for achieving $M_i$. $Z(\mathcal{O})$ satisfies nonnegativity, homogeneity, and the triangle inequality, and hence can be interpreted as a norm. Note that when perfect projection to the state is available, i.e. $\projector{\psi}\in\mathcal{M}$, we have $Z_\mathcal{M}(\projector{\psi})=1$ because $M$ can be chosen to be $\projector{\psi}$. Here, the POVM is $\{\projector{\psi},I-\projector{\psi}\}.$ There are also situations where assigning different ranges for different measurements would improve the variance and hence the overall performance, but for simplicity, we only consider fixed ranges in the main text. The analysis for general POVMs and unbounded measurements is in Appendix \ref{sec:povm_unbounded}.

So far we have considered performing generalized DFE of arbitrary operators and characterized the efficiency. Next, we utilize this method to perform DFE-based tomography in a subspace and identify cases where such tomography can be done efficiently.
\begin{definition}
Define a state space $\mathcal{S}$ to have a cost of $\beta$ with $\mathcal{M}$ if and only if the following two conditions are satisfied.

\begin{equation}\dim{\mathcal{S}}\leq \beta,\end{equation}
\begin{equation}
\max_{\substack{\ket{\psi}\in\mathcal{S} \\ \braket{\psi}{\psi}=1}}  Z_\mathcal{M}(\projector{\psi})\leq \beta.
\end{equation}

\end{definition}
\begin{lemma}
A $\poly(m)$ dimensional space $\mathcal{S}$ has a $\poly(m)$ cost with $\mathcal{M}$ if and only if there exists an orthonormal basis $\{\ket{\psi_a}\}$ such that
$$\max_{1\leq a,b\leq\dim{\mathcal{S}},c\in\{0,1\}}Z_\mathcal{M}(i^c\ketbra{\psi_a}{\psi_b}+h.c.)=O\left(\poly(m)\right).$$
\end{lemma}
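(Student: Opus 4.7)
The plan is to prove both directions by exploiting the norm-like properties of $Z_\mathcal{M}$ (nonnegativity, homogeneity, triangle inequality) stated earlier, together with polarization identities that trade off-diagonal matrix units for projectors onto normalized superpositions.

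For the forward ($\Rightarrow$) direction, I assume $\dim \mathcal{S}\le\beta=\poly(m)$ and $Z_\mathcal{M}(\projector{\psi})\le\beta$ for every unit $\ket{\psi}\in\mathcal{S}$, and pick any orthonormal basis $\{\ket{\psi_a}\}$ of $\mathcal{S}$. For $a\neq b$, the polarization identities
\begin{equation}
\ketbra{\psi_a}{\psi_b}+\ketbra{\psi_b}{\psi_a}=\projector{\psi_+^{ab}}-\projector{\psi_-^{ab}},
\end{equation}
\begin{equation}
i\ketbra{\psi_a}{\psi_b}-i\ketbra{\psi_b}{\psi_a}=\projector{\phi_-^{ab}}-\projector{\phi_+^{ab}},
\end{equation}
with $\ket{\psi_\pm^{ab}}=(\ket{\psi_a}\pm\ket{\psi_b})/\sqrt{2}$ and $\ket{\phi_\pm^{ab}}=(\ket{\psi_a}\pm i\ket{\psi_b})/\sqrt{2}$, express each operator in the lemma as a difference of projectors onto unit vectors of $\mathcal{S}$, so the triangle inequality applied to $Z_\mathcal{M}$ immediately gives a bound of $2\beta$. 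The diagonal case $a=b$ is handled directly: $c=0$ gives $2\projector{\psi_a}$ with $Z_\mathcal{M}\le 2\beta$, and $c=1$ gives the zero operator.

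For the reverse ($\Leftarrow$) direction, I assume $\dim\mathcal{S}=\poly(m)$ and that the basis-dependent maximum $Z_\text{max}$ of the lemma is $\poly(m)$, and need to bound $Z_\mathcal{M}(\projector{\psi})$ for an arbitrary unit $\ket{\psi}=\sum_a c_a\ket{\psi_a}\in\mathcal{S}$. Writing
\begin{equation}
\projector{\psi}=\sum_a|c_a|^2\projector{\psi_a}+\sum_{a<b}\bigl[c_a c_b^*\ketbra{\psi_a}{\psi_b}+\text{h.c.}\bigr]
\end{equation}
and decomposing $c_a c_b^*=x_{ab}+iy_{ab}$ with $x_{ab},y_{ab}\in\mathbb{R}$, each off-diagonal pair becomes $x_{ab}(\ketbra{\psi_a}{\psi_b}+\text{h.c.})+y_{ab}(i\ketbra{\psi_a}{\psi_b}+\text{h.c.})$, a real combination of the two Hermitian operators appearing in the lemma's hypothesis. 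The triangle inequality then yields
\begin{equation}
Z_\mathcal{M}(\projector{\psi})\le\Bigl(\sum_a|c_a|^2+\sum_{a<b}(|x_{ab}|+|y_{ab}|)\Bigr)Z_\text{max}.
\end{equation}
Since $|x_{ab}|+|y_{ab}|\le\sqrt{2}|c_a||c_b|$ and $\sum_{a,b}|c_a||c_b|=(\sum_a|c_a|)^2\le\dim\mathcal{S}$ by Cauchy--Schwarz applied to $\sum_a|c_a|^2=1$, the prefactor is $O(\dim\mathcal{S})=\poly(m)$, giving $Z_\mathcal{M}(\projector{\psi})=\poly(m)$ uniformly in $\ket{\psi}$.

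I expect the main obstacle to be somewhat cosmetic rather than conceptual: one has to verify that $Z_\mathcal{M}$ genuinely obeys the triangle inequality on the space of Hermitian operators in the sense used above (combining decompositions of different summands into a single decomposition that realizes the minimum in the definition), and to be careful that the polarization identity sends normalized basis vectors of $\mathcal{S}$ to normalized vectors still in $\mathcal{S}$ so that the cost hypothesis applies. Beyond that, the argument is just a Cauchy--Schwarz bookkeeping on the expansion coefficients, and the polynomial growth of both $\dim\mathcal{S}$ and $Z_\text{max}$ produces the desired polynomial bound on the overall cost $\beta$.
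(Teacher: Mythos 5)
Your proof is correct and takes essentially the same route as the paper: both directions rest on the asserted norm properties of $Z_\mathcal{M}$, with the forward direction using a polarization identity to write $i^c\ketbra{\psi_a}{\psi_b}+h.c.$ in terms of projectors onto normalized superpositions lying in $\mathcal{S}$, and the converse expanding $\projector{\psi}$ in the basis and applying the triangle inequality to the coefficients. The only (immaterial) difference is that you express the off-diagonal operator as a difference of two superposition projectors (bound $2\beta$), whereas the paper writes it as twice one superposition projector minus $\projector{\psi_a}+\projector{\psi_b}$ (bound $4\beta$).
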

\begin{lemma}
Given a $\poly(m)$-cost space $\mathcal{S}$ with $\mathcal{M}$, $\rho_\mathcal{S}=P_\mathcal{S}\rho P_\mathcal{S}$ can be determined with probability $1-\delta$  within Frobenius distance $\epsilon$ by performing $\poly(m,\ln\delta,\epsilon^{-1})$ measurements from $\mathcal{M}$.
\end{lemma}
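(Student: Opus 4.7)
The plan is to decompose $\rho_\mathcal{S}$ into its matrix elements in the privileged orthonormal basis $\{\ket{\psi_a}\}_{a=1}^{\dim\mathcal{S}}$ promised by the previous lemma, estimate each element with generalized DFE, and then combine the errors using a union bound. Writing
\begin{equation}
\rho_\mathcal{S}=\sum_{a,b}\rho_{ab}\ketbra{\psi_a}{\psi_b},\qquad \rho_{ab}=\braketOP{\psi_a}{\rho}{\psi_b},
\end{equation}
each complex entry $\rho_{ab}$ is recovered from the two Hermitian operators $H_{ab}^{(0)}=\ketbra{\psi_a}{\psi_b}+\ketbra{\psi_b}{\psi_a}$ and $H_{ab}^{(1)}=i\ketbra{\psi_a}{\psi_b}-i\ketbra{\psi_b}{\psi_a}$, whose expectations give twice the real and imaginary parts of $\rho_{ab}$. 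By the previous lemma these operators have $Z_\mathcal{M}(H_{ab}^{(c)})=O(\poly(m))$, so the generalized DFE procedure of Section~\ref{sec:gdfe} applies directly to each of them.

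For each of the $O((\dim\mathcal{S})^2)=O(\poly(m))$ independent estimations, I would draw i.i.d.\ samples $X_1,\dots,X_N$ from the distribution constructed in Section~\ref{sec:gdfe}, each bounded in magnitude by $Z_\mathcal{M}(H_{ab}^{(c)})/2$, with mean equal to $\tr(H_{ab}^{(c)}\rho)-C$. Hoeffding's inequality then gives
\begin{equation}
\Pr\!\left[\,\abs[\big]{\hat\rho_{ab}-\rho_{ab}}>\epsilon'\,\right]\le 2\exp\!\left(-\frac{2N\epsilon'^2}{Z_\mathcal{M}^2}\right),
\end{equation}
so $N=O(Z_\mathcal{M}^2\,\epsilon'^{-2}\ln\delta'^{-1})$ samples suffice per entry.

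To convert an entrywise bound into a Frobenius-distance bound, I use
\begin{equation}
\norm{\hat\rho_\mathcal{S}-\rho_\mathcal{S}}_F^2=\sum_{a,b}\abs[\big]{\hat\rho_{ab}-\rho_{ab}}^2\le (\dim\mathcal{S})^2\,\epsilon'^2,
\end{equation}
which shows I should pick $\epsilon'=\epsilon/\dim\mathcal{S}$. Applying the union bound with $\delta'=\delta/(2(\dim\mathcal{S})^2)$ over all real and imaginary parts yields a total sample cost
\begin{equation}
N_{\text{tot}}=O\!\left((\dim\mathcal{S})^2\cdot\frac{Z_\mathcal{M}^2(\dim\mathcal{S})^2}{\epsilon^2}\cdot\ln\frac{(\dim\mathcal{S})^2}{\delta}\right),
\end{equation}
which is $\poly(m,\epsilon^{-1},\ln\delta^{-1})$ because both $\dim\mathcal{S}$ and $Z_\mathcal{M}$ are polynomial in $m$ by the cost-$\beta$ assumption.

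The one genuine subtlety, and what I expect to be the only nonroutine point, is that the basis $\{\ket{\psi_a}\}$ furnished by the previous lemma is only guaranteed to control $Z_\mathcal{M}$ on the Hermitian \emph{transition} operators $i^c\ketbra{\psi_a}{\psi_b}+h.c.$, not on arbitrary pure-state projectors in $\mathcal{S}$; so the argument must be run explicitly in this basis rather than by invoking the cost condition on $\projector{\psi}$ for a general $\ket{\psi}\in\mathcal{S}$. Everything else is bookkeeping: choosing $(\epsilon',\delta')$ so that the per-entry error and failure probability inflate the total cost only by polynomial factors, and noting that since the estimator is for $P_\mathcal{S}\rho P_\mathcal{S}$ rather than a unit-trace density matrix, no renormalization is performed, matching the self-verifying character of DEMESST already emphasized for Corollary~\ref{cr:1}.
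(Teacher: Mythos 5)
Your proposal is correct and follows essentially the same route as the paper's proof: estimate each Hermitian transition operator $i^c\ketbra{\psi_a}{\psi_b}+h.c.$ via generalized DFE with Hoeffding's bound, then set $\epsilon'=\epsilon/\dim\mathcal{S}$ and $\delta'=\delta/(\dim\mathcal{S})^2$ and union-bound over the $(\dim\mathcal{S})^2$ entries. The paper's version is just a more compressed statement of the same bookkeeping.
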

The proofs of lemmas are in Appendix \ref{proofs}. Lemma 2 and 3 establish that the capability of performing efficient DFE of any state in a polynomial subspace  guarantees efficient tomography of the projected density operator. For neighborhood subspaces, under certain conditions, one only needs to be capable of performing DFE of a base state $\ket{\psi}$ efficiently, because it guarantees that any states within the neighborhood subspace generated from this state will also have efficient DFE by expanding the measurements to include Hadamard tests with the generators of the neighborhood (see Fig. \ref{fig:1}). To state this efficiency condition of neighborhood subspace formally:

\begin{figure}[]

\includegraphics[]{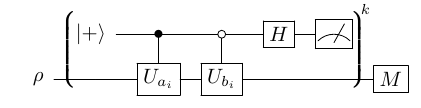}
\caption{An example of a measurement in the Hadamard-expanded set of measurements $\mathcal{M}$, generated by performing multiple Hadamard tests with $U\in\mathcal{U}$, followed by the measurement $M\in\mathcal{M}'$. This allows efficient estimation of $\tr{\left(U_a^\dagger\projector{\psi}U_b\right)}$.\label{fig:1}}
\end{figure}

\begin{theorem}\label{th:1}
Given a basis state $\ket{\psi}$ and a base set of measurements $\mathcal{M}'$, if $\mathcal{M}'$ permits efficient DFE of $\ket{\psi}$, i.e.
$$Z_{\mathcal{M}'}\left(\projector{\psi}\right)=O(\poly(m)),$$
then for $N_k(\ket{\psi},\mathcal{U})$, the neighborhood generated from $\ket{\psi}$ by up to $k$ applications of unitaries from $\mathcal{U}$, define the Hadamard-test-expanded set of measurements $\mathcal{M}$ to be
$$\mathcal{M} := \left\{\mathcal{C}(M)\mid M\in\mathcal{M}', \mathcal{C}\in\mathbb{O}\right\},$$
where $\mathbb{O}$ is the set of all sequences of controlled unitaries from $\mathcal{U}$ with ancilla qubits as control preceded by preparing such ancilla in $\ket{+}$ states followed by measuring these ancillae (See FIG.~\ref{fig:1}). If $k=O(1)$, $\abs{\mathcal{U}}=O(\poly(m))$, then
$$Z_{\mathcal{M}}\left(i^c\prod^k_{i=1}U_{a_i}\projector{\psi}\left(\prod^k_{j=1}U_{b_j}\right)^\dagger+h.c.\right)=O(\poly(m)),$$
where $U_{a_i},U_{b_i}\in\{I\}\cup\mathcal{U}$.

Furthermore, consider the Gram matrix $G_{\vec{a}\vec{b}}=\braketOP{\psi}{(\prod^k_{i=1}U_{a_i})^\dagger\prod^k_{j=1}U_{b_j}}{\psi}$ of the neighborhood basis $\left\{\prod^k_{j=1}U_{a_j}\ket{\psi}\right\}$, then
\begin{equation*}
\max_{\substack{\ket{\psi'}\in N_k(\ket{\psi},\mathcal{U})\\\braket{\psi'}{\psi'}=1}}Z_{\mathcal{M}}(\projector{\psi'})=O\left(\poly(m,\norm{G^+}_2)\right).
\end{equation*}
Here, $G^+$ is the Moore–Penrose inverse of $G$. In other words, $\norm{G^+}_2$ is the inverse of the smallest nonzero eigenvalue of $G$. Therefore, if $\norm{G^+}_2=O(\poly{(m)})$, then the DFE cost of any state in the neighborhood is $O(\poly{(m)})$.
\end{theorem}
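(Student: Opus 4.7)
The plan is to establish the theorem in two stages: first, show that the Hermitian cross-terms $i^c \tilde U_a \projector{\psi}\tilde U_b^\dagger + h.c.$ (writing $\tilde U_a := \prod_{i=1}^k U_{a_i}$) admit $Z_\mathcal{M}$ of polynomial order; then, for an arbitrary normalized $\ket{\psi'}\in N_k(\ket{\psi},\mathcal{U})$, expand $\projector{\psi'}$ in the overcomplete basis $\{\tilde U_{\vec a}\ket{\psi}\}$ via the Moore--Penrose pseudoinverse and bound the cost by subadditivity of $Z_\mathcal{M}$.

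For the first claim, I would start from any optimal DFE decomposition $\projector{\psi} = \sum_i f_i M_i + CI$ realizing $Z_{\mathcal{M}'}(\projector{\psi})$. The identity constant $C$ does not enter $Z_{\mathcal{M}'}$, but must be controlled after conjugation; taking the trace of the decomposition together with $|\tr M_i|\leq d$ (for $\pm1$-valued measurements) yields $|C|\leq 1/d + \sum_i|f_i| = O(Z_{\mathcal{M}'}(\projector{\psi}))$. Substituting,
\begin{equation*}
i^c \tilde U_a \projector{\psi}\tilde U_b^\dagger + h.c. = \sum_i f_i\bigl(i^c \tilde U_a M_i \tilde U_b^\dagger + h.c.\bigr) + C\bigl(i^c \tilde U_a \tilde U_b^\dagger + h.c.\bigr),
\end{equation*}
and each Hermitian summand is twice the expectation of a single Hadamard-test measurement in $\mathcal{M}$: prepare one ancilla in $\ket{+}$, apply the branch-selector $\ketbra{0}{0}\otimes\tilde U_a + \ketbra{1}{1}\otimes\tilde U_b$ (a sequence of $2k = O(1)$ controlled gates from $\mathcal{U}$ sharing a common ancilla, hence in $\mathbb{O}$), measure the ancilla in $X$ (for $c=0$) or $Y$ (for $c=1$) and simultaneously measure $M_i$ on the system (the trivial measurement for the $C$-term), and record the product of outcomes (range~$2$). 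Using norm subadditivity of $Z_\mathcal{M}$ then gives $Z_\mathcal{M}\bigl(i^c \tilde U_a \projector{\psi}\tilde U_b^\dagger + h.c.\bigr) = O(Z_{\mathcal{M}'}(\projector{\psi})) = O(\poly(m))$.

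For the second claim, let $V$ be the linear map whose columns are $\tilde U_{\vec a}\ket{\psi}$ for sequences $\vec a$ of length $k$ with entries in $\{I\}\cup\mathcal{U}$, so that $G = V^\dagger V$ and $\mathrm{range}(V) = N_k(\ket{\psi},\mathcal{U})$. The minimum-$\ell^2$-norm expansion coefficients in $\ket{\psi'} = V\vec\alpha$ are $\vec\alpha = V^+\ket{\psi'}$, and satisfy
\begin{equation*}
\norm{\vec\alpha}_2^2 = \bra{\psi'}V(G^+)^2 V^\dagger\ket{\psi'} \leq \norm{G^+}_2 \bra{\psi'}VG^+V^\dagger\ket{\psi'} = \norm{G^+}_2,
\end{equation*}
where the operator inequality $V(G^+)^2V^\dagger \preceq \norm{G^+}_2\cdot VG^+V^\dagger$ uses that $G^+$ commutes with $\norm{G^+}_2 I - G^+ \succeq 0$, and the final equality uses that $VG^+V^\dagger$ orthogonally projects onto $\mathrm{range}(V)\ni\ket{\psi'}$. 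Expanding $\projector{\psi'} = \sum_{\vec a,\vec b}\alpha_{\vec a}\bar\alpha_{\vec b}\tilde U_{\vec a}\projector{\psi}\tilde U_{\vec b}^\dagger$ and pairing each $(\vec a,\vec b)$ with its conjugate $(\vec b,\vec a)$, every pair becomes a real combination of two Hermitian cross-terms covered by the first claim, with combined $Z_\mathcal{M}$-cost at most $\sqrt 2|\alpha_{\vec a}\alpha_{\vec b}|\cdot O(\poly(m))$. Subadditivity of $Z_\mathcal{M}$, together with Cauchy--Schwarz $\norm{\vec\alpha}_1^2 \leq D\norm{\vec\alpha}_2^2$ for $D = (|\mathcal{U}|+1)^k = O(\poly(m))$ by Lemma~1, then yields
\begin{equation*}
Z_\mathcal{M}(\projector{\psi'}) \leq O(\poly(m))\cdot\norm{\vec\alpha}_1^2 \leq O(\poly(m))\cdot D\norm{G^+}_2 = O(\poly(m,\norm{G^+}_2)).
\end{equation*}

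The main obstacle is the bookkeeping around the identity constant $C$: it is unconstrained by the definition of $Z_{\mathcal{M}'}$ but must be polynomially bounded in order to control the Hadamard-test contribution of the $C(i^c\tilde U_a\tilde U_b^\dagger + h.c.)$ term after conjugation; the trace argument above resolves this. A secondary ingredient is the spectral inequality $\norm{\vec\alpha}_2^2 \leq \norm{G^+}_2$ for the minimum-norm expansion of any unit vector in $\mathrm{range}(V)$, which is the standard Moore--Penrose identity adapted to an overcomplete rather than orthonormal basis, and is what causes the factor $\norm{G^+}_2$ to appear in the final bound.
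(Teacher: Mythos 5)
Your proposal is correct and follows essentially the same route as the paper's proof: Hadamard tests with controlled unitaries realize the off-diagonal cross-terms $i^c\tilde U_a\projector{\psi}\tilde U_b^\dagger+h.c.$ as measurements in $\mathcal{M}$ with cost $O(Z_{\mathcal{M}'}(\projector{\psi}))$, and any normalized state in the neighborhood is expanded in the (generally non-orthogonal) basis $\{\tilde U_{\vec a}\ket{\psi}\}$ with coefficients controlled by $\norm{G^+}_2$, after which subadditivity of $Z_\mathcal{M}$ gives the bound. You supply two details the paper leaves implicit --- the trace argument bounding the identity offset $C$ after conjugation, and the precise derivation $\norm{\vec\alpha}_2^2\leq\norm{G^+}_2$ via the pseudoinverse --- both of which check out.
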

The proof of Theorem 1 is deferred to the Appendix \ref{proofs}. In most physically relevant cases, $\norm{G^+}_2=O(1)$. In particular, if $\{\prod^k_{i=1}U_{a_i}\}$ is an orthonormal set, $G$ is the identity matrix, and hence $\norm{G^+}_2=1$. In this theorem, the crucial condition for the set of measurements to be sufficient for efficient DFE within the neighborhood subspace is that the measurement set is invariant under the Hadamard test. Most sets of measurements are invariant under Hadamard tests of local unitaries. For example, a product measurement preceded by any number of Hadamard tests of single qubit unitaries is still a product measurement. Thus, with these assumptions, any neighborhood around any DFE-efficient state generated by local operators allows for efficient DFE-based tomography. For example, if the set of possible measurements is the set of all product measurements, since local operations do not alter the product property of operators, the set of product measurements is invariant under local Hadamard tests. For Pauli measurements, although the Pauli group is not closed under arbitrary local operation, any local operation can only transform a single qubit Pauli operator into an operator with a 2-norm of at most 1, effectively rotating the axis of measurements. In the worst case, the new axis direction would be equidistant from the three principle axes, introducing a factor of $\sqrt{3}$ to the DFE cost. Hence, up to a constant $k$, such efficiency is still retained. The expansion of measurements with Hadamard tests can also be viewed as a method to expand the available set of POVM with controlled unitaries, for example, Pauli measurements are expanded to product measurements with local Hadamard tests, and product measurements in turn can be expanded to quasilocal measurements with Hadamard tests of two-qubit gates.

Finally, we present our DEMESST algorithm to perform tomography of a polynomial subspace as in Algorithm \ref{alg}. The resultant operator is an unbiased estimator of the projected density operator, and is not guaranteed to be positive semidefinite nor trace 1. In particular, the expected value of the trace is equal to the trace of the product of the physical state and the projection to the subspace. Hence, this procedure can self-verify: if the trace is close to one, the physical state is indeed mostly in such a subspace; if the state is mostly outside of the subspace, it will be reflected through the trace. If the application requires a positive semidefinite operator, one can project to the nearest physical state~\cite{PhysRevLett.108.070502}.
\begin{algorithm}
\caption{DEMESST. The inputs are the basis $\left\{\ket{\psi_j}\right\}$, the available set of measurements $\mathcal{M}$, and the sample count for the elements $t_{jl}$, while the output is an estimate of $\rho$ in the basis provided, i.e. $\rho_{jl}\sim\braketOP{\psi_j}{\rho}{\psi_l}$. For $j=l$, $t_{jj}$ indicates the sample count for the diagonal element $\rho_{jj}$. For $j<l$, $t_{jl}$ indicates the sample count for the real part of $\rho_{jl}$, while $t_{lj}$ indicates the sample count for the imaginary part.}\label{alg}
\KwData{$\left\{\ket{\psi_j}\right\},\mathcal{M},t_{jl}$}
\KwResult{$\rho_{jl}\sim\braketOP{\psi_j}{\rho}{\psi_l}$}
$\rho\gets \text{Zeros}(\abs{\left\{\ket{\psi_j}\right\}},\abs{\left\{\ket{\psi_j}\right\}})$\;
\For{$j\gets1$ to $\abs{\left\{\ket{\psi_j}\right\}}$}{
  Find $C$ and $f_i$ such that $\projector{\psi_j}=C+\sum_if_iM_i$\;
  \For{$a\gets1$ to $t_{jj}$}{
    Randomly sample and measure $M_i$ with $p_i=\frac{\abs{f_i}}{\sum_i\abs{f_i}}$, and assign the measurement result to $X_a$\;
  }
  $\rho_{jj}\gets$ Average of $\mathop{sgn}(f_{i_a})X_a$\;
  \For{$l\gets j+1$ to $\abs{\left\{\ket{\psi_j}\right\}}$}{
    Find $C$ and $f_i$ such that $\ketbra{\psi_j}{\psi_l}+h.c.=C+\sum_if_iM_i$\;
    \For{$a\gets1$ to $t_{jl}$}{
      Randomly sample and measure $M_i$ with $p_i=\frac{\abs{f_i}}{\sum_i\abs{f_i}}$ and assign the measurement result to $X_a$\;
    }
    $\rho_{jl}\gets$ Average of $\frac{1}{2}\mathop{sgn}(f_{i_a})X_a$\;
    Find $C$ and $f_i$ such that $i\ketbra{\psi_j}{\psi_l}+h.c.=C+\sum_if_iM_i$\;
    \For{$a\gets1$ to $t_{lj}$}{
      Randomly sample and measure $M_i$ with $p_i=\frac{\abs{f_i}}{\sum_i\abs{f_i}}$ and assign the measurement result to $X_a$\;
    }
    $\rho_{lj}\gets\rho_{jl}-$ Average of $\frac{i}{2}\mathop{sgn}(f_{i_a})X_a$\;
    $\rho_{jl}\gets\rho_{jl}+$ Average of $\frac{i}{2}\mathop{sgn}(f_{i_a})X_a$\;
  }
}
\end{algorithm}
\section{DFE Cost: Examples}
With this framework, we analyze classes of states with efficient DFE and the corresponding sets of operators that can generate $\poly(m)$-cost subspaces.\\

Firstly, we revisit the classical example of stabilizer states:
\begin{corollary}
\label{cr:2}
If $\ket{\psi}$ is a stabilizer state, tomography of the neighborhood space generated with polynomially many different Pauli operators, i.e. $N_k(\ket{\psi},\mathcal{U})$, where $\mathcal{U}\subseteq\mathcal{P}$, $\abs{\mathcal{U}}=O(\poly(m))$, and $k=O(1)$, can be done with a polynomial number of Pauli measurements.
\end{corollary}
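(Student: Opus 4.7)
The plan is to apply Lemmas 2 and 3 by exhibiting an orthonormal basis $\{\ket{\psi_a}\}$ of $N_k(\ket{\psi},\mathcal{U})$ in which every diagonal $\projector{\psi_a}$ and every operator $i^c\ketbra{\psi_a}{\psi_b}+h.c.$ has $O(1)$ Pauli DFE cost. By Lemma 1, $\dim N_k\leq(\abs{\mathcal{U}}+1)^k=O(\poly(m))$, so such a basis is polynomial in size and Lemma 3 will then deliver the tomography in polynomially many Pauli measurements.

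To build the basis, I would use that every element of $N_k(\ket{\psi},\mathcal{U})$ is a linear combination of states of the form $V\ket{\psi}$, where $V$ is a product of at most $k$ Paulis in $\mathcal{U}\subseteq\mathcal{P}$, and is therefore itself a Pauli up to a global phase. For two such $V_a,V_b$, the overlap $\braketOP{\psi}{V_a^\dagger V_b}{\psi}$ equals the stabilizer expectation of a Pauli, which is $\pm 1$ when $V_a^\dagger V_b\in\pm S$ (with $S$ the stabilizer group of $\ket{\psi}$) and $0$ otherwise. Thus after identifying states equivalent up to phase, $\{V_a\ket{\psi}\}_a$ is orthonormal. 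For the cost bounds, I would expand $\projector{\psi}=\tfrac{1}{2^m}\sum_{g\in S}g$ so that $\projector{V_a\psi}=\tfrac{1}{2^m}\sum_{g\in S}V_agV_a^\dagger$; pulling out the identity term ($g=I$) as the constant $C$ leaves $2^m-1$ Paulis with coefficients $\pm 2^{-m}$, giving $Z_\mathcal{P}(\projector{V_a\psi})\leq 2-2^{1-m}$. For off-diagonals, $V_a\projector{\psi}V_b^\dagger+h.c.$ similarly splits into $2^m$ terms, each paired with its Hermitian conjugate to produce a real coefficient of magnitude $\leq 2/2^m$, yielding $Z_\mathcal{P}\leq 4$; the imaginary combination $i\ketbra{V_a\psi}{V_b\psi}+h.c.$ is handled identically.

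The main subtlety is verifying that the deduplication step genuinely produces an orthonormal basis---i.e., that distinct cosets of $\pm S$ in the set of candidate $V$'s yield orthogonal states and that distinct $g\in S$ generate distinct Paulis $V_agV_b^\dagger$. Both facts reduce to the standard observation that $-I\notin S$ for a pure stabilizer state. With these in hand, the $O(1)$ bounds above satisfy the hypothesis of Lemma 2, so $N_k(\ket{\psi},\mathcal{U})$ is a $\poly(m)$-cost subspace with $\mathcal{P}$, and Lemma 3 yields the claimed tomography in $\poly(m,\epsilon^{-1},\ln\delta^{-1})$ Pauli measurements.
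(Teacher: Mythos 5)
Your proposal is correct and follows essentially the same route as the paper: expand $\projector{V_a\psi}$ and the off-diagonal elements $V_a\projector{\psi}V_b^\dagger+h.c.$ over the stabilizer group as sums of $2^m$ Pauli terms with coefficients of magnitude $O(2^{-m})$, observe that the Pauli-shifted basis states are pairwise either identical up to phase or orthogonal (so deduplication yields an orthonormal basis), and invoke Lemmas 2 and 3. The only differences are cosmetic: your off-diagonal bound $Z_\mathcal{P}\leq 4$ is looser than the paper's $\leq 2-2^{1-m}$ (half the terms actually vanish upon adding the Hermitian conjugate), but both are $O(1)$, which is all that is needed.
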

The Pauli DFE cost factor of any Hermitian operator is
\begin{equation}
Z_\mathcal{P}\left(\mathcal{O}\right)=2^{1-m}\sum_{P\in\{I,\sigma_x,\sigma_y,\sigma_z\}^{\otimes m}}\abs{\tr\left(P\widetilde{\mathcal{O}}\right)}.
\end{equation}
For a stabilizer state, this would be $2^{1-m}(2^m-1)=2-2^{1-m}$, since within $2^m$ stabilizers, only $2^m-1$ are nontrivial. The neighborhood basis (or its subset, since if the product of two Paulis is proportional to a stabilizer, the two states will be identical up to a global phase) is orthonormal, hence the Gram matrix will have a bounded $\norm{G^+}_2$. Therefore, from Theorem \ref{th:1}, one can perform efficient tomography of the neighborhood subspace with Hadamard-expanded Pauli measurements. However, to prove the corollary, we need to go further and show that Pauli measurements are sufficient to perform the tomography efficiently. Hence, we consider the elements of the neighborhood basis directly. The off-diagonal elements in the neighborhood basis,
\begin{equation}
(P_i\ket{\psi})(\bra{\psi}P_j^\dagger)=2^{-m}\sum_{s\in S}P_isP_j^\dagger,
\end{equation}
is also an average of Pauli operators. After separating Hermitian and anti-Hermitian parts, we can sample it similarly to DFE of the stabilizer state, with $Z_\mathcal{P}\leq2-2^{1-m}$; therefore, the DFE of any element in the neighborhood basis and hence the tomography is efficient. Furthermore, since this process has no requirements beyond the stabilizer formalism, the generating Pauli operators can be non-local, such as single-qubit errors propagating through a Clifford circuit. Any Pauli error in a Clifford circuit would end up as a (potentially different) Pauli error in the final state. Hence, if we prepare a stabilizer state from a product state through a Clifford circuit with $\mathfrak{m}$ single or two-qubit gates, we can cover up to $k$ gate errors by considering the neighborhood generated by up to $k$ applications of the $\leq15\mathfrak{m}$ possible Pauli gate errors (15 corresponds to the number of nontrivial Paulis for two qubits), which has $O((15\mathfrak{m})^k)$ possibilities. This can be further extended to process tomography of circuits with $O(\ln{\mathfrak{m}})$ non-Clifford gates by representing non-Clifford gates as superpositions of Paulis and performing DFE of the Choi operator~\cite{PRXQuantum.5.030339}. Note that to perform DFE of the Choi operator, we do not need to physically prepare maximally entangled states with ancilla qubits. Instead, we can prepare product stabilizer states and measure the output state according to the Pauli operator being sampled on the Choi state. This is listed as the first column in TABLE \ref{table:1}.

Continuous variable systems have states analogous to stabilizer states in multiqubit systems, i.e. GKP states~\cite{GKP}, the eigenstates of certain commuting sets of displacement operators. For example, $\ket{\psi}=\left(\sum_{s=-\infty}^\infty \ket{q=\sqrt{2\pi}s}\right)^{\otimes m}$ is an $m$-mode square lattice GKP state. Such states are stabilized by displacement operators $D_\alpha$ and displaced parity operators $D_\alpha(-1)^{\sum\hat{n}}D_\alpha^\dagger$, as long as the displacement $\vec{\alpha}$ is a lattice vector of the state. Similar to qubit stabilizer states, DFE of such states also have a constant overhead with a simple set of measurements:
\begin{corollary}
\label{cr:3}
If $\projector{\psi}$ is an ideal GKP state and $\mathcal{M}$ consists of all Wigner measurements, i.e. $\expval{M(\vec{\alpha})}_\rho=\expval{D_\alpha(-1)^{\sum\hat{n}} D^\dagger_\alpha}_\rho=\left(\frac{\pi}{2}\right)^mW_\rho(\vec{\alpha})$, the DFE cost of $\projector{\psi}$ is $2$.
\end{corollary}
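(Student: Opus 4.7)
The plan is to leverage the stabilizer structure of the ideal GKP state so that $\projector{\psi}$ admits a decomposition into displaced parity operators whose total absolute weight is $1$, yielding DFE cost $2$. For the $m$-mode square-lattice state $\ket{\psi}=(\sum_s\ket{q=\sqrt{2\pi}s})^{\otimes m}$, the stabilizer group is generated by the displacements $D(\vec{v})$ with $\vec{v}\in\Lambda_{\mathrm{stab}}=(\sqrt{2\pi}\mathbb{Z})^{2m}$ together with the global parity $\Pi=\prod_i(-1)^{\hat{n}_i}$. Using the Weyl-algebra identity $D(\vec{\alpha})\Pi D(\vec{\alpha})^\dagger = D(2\vec{\alpha})\Pi$ together with $\Pi\ket{\psi}=\ket{\psi}$ and $D(\vec{v})\ket{\psi}=e^{i\phi(\vec{v})}\ket{\psi}$ for $\vec{v}\in\Lambda_{\mathrm{stab}}$, one obtains $\Pi(\vec{\alpha})\ket{\psi}=\epsilon(\vec{\alpha})\ket{\psi}$ with $\epsilon(\vec{\alpha})=\pm 1$ for every $\vec{\alpha}\in\tfrac{1}{2}\Lambda_{\mathrm{stab}}$.

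Next, I would argue that the joint $+1$ eigenspace of $\{\epsilon(\vec{\alpha})\Pi(\vec{\alpha})\}_{\vec{\alpha}\in\frac{1}{2}\Lambda_{\mathrm{stab}}}$ is one-dimensional, spanned by $\ket{\psi}$: products $\Pi(\vec{\alpha})\Pi(\vec{\beta})\propto D(2(\vec{\alpha}-\vec{\beta}))$ generate all stabilizer displacements (since $2(\vec{\alpha}-\vec{\beta})\in\Lambda_{\mathrm{stab}}$), and the displacement subgroup alone already pins down $\ket{\psi}$ uniquely (the GKP code space is one-dimensional). A coset-sum projector argument then gives
\begin{equation*}
\projector{\psi}=\frac{1}{N}\sum_{\vec{\alpha}\in\frac{1}{2}\Lambda_{\mathrm{stab}}}\epsilon(\vec{\alpha})\,\Pi(\vec{\alpha}),
\end{equation*}
where $N$ is the size of the indexing set. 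Reading this as a DFE expansion $\projector{\psi}=CI+\sum_i f_iM_i$ with $C=0$, $M_i=\Pi(\vec{\alpha}_i)$, and $f_i=\epsilon(\vec{\alpha}_i)/N$, and noting that each Wigner measurement has $\lambda_{\max}-\lambda_{\min}=2$, one computes
\begin{equation*}
Z_\mathcal{M}(\projector{\psi})=2\sum_i|f_i|=2\cdot N\cdot\frac{1}{N}=2.
\end{equation*}

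The main obstacle is making the formally infinite sum rigorous, since the ideal GKP state is non-normalizable. I would address this by working with finite-energy approximants $\ket{\psi_\Delta}\propto e^{-\Delta^2\hat{n}}\ket{\psi}$, where the Gaussian envelope effectively truncates the parity lattice to a finite region of phase space and the sum becomes finite, computing the cost in that setting, and then taking $\Delta\to 0$. A secondary technical task is to track the phases from the symplectic Weyl algebra carefully to pin down the exact sign pattern $\epsilon(\vec{\alpha})$ and to confirm that the generated stabilizer group indeed has a one-dimensional $+1$ joint eigenspace, so that the coset-sum projector identity holds exactly (not merely on $\ket{\psi}$) with the normalization $1/N$ that yields $Z=2$ rather than a larger value.
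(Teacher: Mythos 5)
Your proposal is correct and reaches the same measurement ensemble as the paper, but by a genuinely different derivation. The paper works entirely in the Wigner-function picture: it uses the general formula $Z_{\mathcal{W}}(\sigma)=2^{m+1}\int d^{2m}\vec{\alpha}\,\abs{W_\sigma(\vec{\alpha})}$, observes that the ideal GKP Wigner function takes only the values $0$ and $\pm(2/\pi)^m$, so that $\abs{W_\psi}=(\pi/2)^m W_\psi^2$, and then invokes purity, $\pi^m\int W_\psi^2=\tr(\projector{\psi}^2)=1$, to get $Z=2$ in two lines; it then remarks that this is equivalent to sampling each lattice point with equal probability. You instead derive that equal-weight signed ensemble directly, via the group-average (coset-sum) projector over the displaced parities at half-lattice points, using $D(\vec{\alpha})\Pi D(\vec{\alpha})^\dagger=D(2\vec{\alpha})\Pi$ and the one-dimensionality of the GKP code space. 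Your route has the advantage of making the operator identity $\projector{\psi}=\frac{1}{N}\sum_{\vec{\alpha}}\epsilon(\vec{\alpha})\Pi(\vec{\alpha})$ explicit (rather than implicit in the Wigner overlap formula) and of honestly flagging the regularization needed because the ideal GKP state is non-normalizable and the sum is infinite --- a point the paper's purity argument quietly glosses over (strictly, the ideal GKP Wigner function is a signed lattice of delta functions, so ``takes values $\pm(2/\pi)^m$'' and $\int W_\psi^2=\pi^{-m}$ both require the same finite-energy limit you propose). The paper's derivation, on the other hand, is shorter and follows from a formula for $Z_{\mathcal{W}}$ valid for arbitrary states, which it reuses elsewhere (e.g.\ to show $Z_{\mathcal{W}}=2^{m+1}$ for the vacuum); your stabilizer argument is specific to states with a parity-plus-displacement stabilizer group. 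Both computations of the cost itself (each $f_i=\pm 1/N$, $\lambda_{\max}-\lambda_{\min}=2$, hence $Z=2\sum_i\abs{f_i}=2$) agree.
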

The overlap of two operators can be written as the overlap of the Wigner functions:
\begin{equation}
\begin{split}
\tr(\rho\sigma)&=\pi^m\int d^{2m}\vec{\alpha}\ W_\rho(\vec{\alpha})W_\sigma(\vec{\alpha})\\
&=2^m\int d^{2m}\vec{\alpha}\ \left(\left(\frac{\pi}{2}\right)^mW_\rho(\vec{\alpha})\right)W_\sigma(\vec{\alpha}),
\end{split}
\end{equation}
and the associated cost factor satisfies:
\begin{equation}
Z_{\mathcal{W}}(\sigma)=2^{m+1}\int d^{2m}\vec{\alpha}\ \abs{W_\sigma(\vec{\alpha})}.
\end{equation}
The Wigner function of a GKP state can only take values of $0$ or $\pm\left(\frac{2}{\pi}\right)^m$, therefore
\begin{equation}
\begin{split}
Z_{\mathcal{W}}(\projector{\psi})&=2^{m+1}\int d^{2m}\vec{\alpha}\ \abs{W_\psi(\vec{\alpha})}\\
&=2\pi^m\int d^{2m}\vec{\alpha}\ \left(W_\psi(\vec{\alpha})\right)^2\\
&=2.
\end{split}
\end{equation}
Equivalently, we are sampling each lattice point with equal probability. This is listed as the first column in TABLE \ref{table:2}.

Pauli and Wigner measurements can attain small overheads for some states, such as the examples above. However, there exist simple states that have exponential DFE costs with such measurements. For example, DFE of the tensor product of $m$ magic states
\begin{equation}
\projector{\psi}=\left(\frac{I+3^{-1/2}\left(\sigma_x+\sigma_y+\sigma_z\right)}{2}\right)^{\otimes m},
\end{equation}
with Pauli measurements has an associated cost factor $Z_\mathcal{P}$ that scales exponentially as the mode number grows:
\begin{equation}
Z_\mathcal{P}(\projector{\psi})=2^{1-m}\left(\left(1+\sqrt{3}\right)^m-1\right)=\Theta\left(\left(\frac{1+\sqrt{3}}{2}\right)^m\right).
\end{equation}
However, if we have access to arbitrary product measurements, including projection to arbitrary product states, this would allow for sampling the fidelity directly, and hence $Z=1$ if $\mathcal{M}$ is the set of all product measurements.

A simple example in multimode bosonic system that has an exponential DFE tomography cost with Wigner measurement is the multimode vacuum state, i.e.
\begin{equation}
\projector{\psi}=\projector{0}^{\otimes m}.
\end{equation}
Since $Z_\mathcal{W}\left(\mathcal{O}\right)=2^{m+1}\int d^{2m}\vec{\alpha}\abs{W_{\widetilde{\mathcal{O}}}(\vec{\alpha})}$,
\begin{equation}
Z_\mathcal{W}(\projector{\psi})=2^{m+1}\int d^{2m}\vec{\alpha}\left(\frac{2}{\pi}\right)^me^{-2\abs{\vec{\alpha}}^2}=2^{m+1}.
\end{equation}
However, if Husimi Q function measurements are allowed, including the product of projection to the vacuum of individual modes, the Q function at the origin will sample the fidelity, hence $Z_{\mathcal{Q}}=1$. Note that to perform neighborhood tomography around coherent states, a combination of $Q$ and $W$ measurements are often needed to effectively perform the Hadamard tests. This combination is the protocol used by He et al.~\cite{He2024}. However, the cost of DFE tomography of any normalizable state with only Wigner measurements $Z_W$, including any finite energy GKP states, scales exponentially with the number of modes since
\begin{align}
Z_\mathcal{W}(\projector{\phi})&=2^{m+1}\int d^{2m}\vec{\alpha}\abs{W_{\phi}(\vec{\alpha})}\\
&\leq2^{m+1}\left\lvert\int d^{2m}\vec{\alpha}W_{\phi}(\vec{\alpha})\right\rvert=2^{m+1}.
\end{align}

Therefore, a spanning set of product measurements is not always efficient, although the set of arbitrary product measurements is sufficient for performing efficient DFE of arbitrary product states:
\begin{corollary}
\label{cr:4}
If $\ket{\psi}$ is a product state, tomography of the neighborhood space generated with local operators $\mathcal{K}_{local}$, i.e. $N_k(\ket{\psi},\mathcal{K}_{local})$, where $\abs{\mathcal{K}_{local}}=O(\poly(m))$ and $k=O(1)$, can be done in a polynomial number of product measurements.
\end{corollary}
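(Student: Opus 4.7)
The plan is to invoke Theorem \ref{th:1} with $\mathcal{M}'$ taken to be the set of all product measurements and $\mathcal{U} = \mathcal{K}_{\text{local}}$. Because $\ket{\psi} = \bigotimes_i \ket{\psi_i}$ is a product state, the projector $\projector{\psi} = \bigotimes_i \projector{\psi_i}$ factorizes, so the two-outcome POVM $\{\projector{\psi}, I-\projector{\psi}\}$ is itself a product measurement, giving $Z_{\mathcal{M}'}(\projector{\psi})=1$. I would then point out that the Hadamard-test expansion introduced by Theorem \ref{th:1} preserves the product-measurement class: every $U\in\mathcal{K}_{\text{local}}$ acts on one site, so each controlled-$U$ is a two-qubit gate between one ancilla and one physical site, and the composition with a product POVM at the end is still a product POVM on the extended system, using only $O(k)=O(1)$ ancillae.

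With these two ingredients, I would invoke the Gram-matrix branch of Theorem \ref{th:1}. Every neighborhood basis element $\prod_{j=1}^k U_{a_j}\ket{\psi}$ is again a product state, and the Gram matrix factorizes across sites as $G_{\vec a\vec b}=\prod_i\braket{a_i^{}}{b_i^{}}$, so in the generic or physically relevant case $\norm{G^+}_2=O(1)$, yielding $Z_\mathcal{M}=O(\poly(m))$ for every normalized $\ket{\psi'}\in N_k(\ket{\psi},\mathcal{K}_{\text{local}})$ and hence, via Lemma 3, polynomial subspace tomography.

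As a more self-contained alternative I would argue directly on the basis elements. Any two of them, $\ket{\psi_a}=\bigotimes_i\ket{a_i}$ and $\ket{\psi_b}=\bigotimes_i\ket{b_i}$, differ on at most $2k=O(1)$ sites because each application of a local operator can change only one site. On the (at most $2k$) differing sites I would split $O_i=\ket{a_i}\bra{b_i}=H_i+iA_i$ into Hermitian and anti-Hermitian pieces, each of single-qubit operator norm at most $1$; on the remaining sites $O_i=\projector{\psi_i}$ is already Hermitian. Expanding the product and adding $i^c\bigotimes_i O_i+h.c.$ collapses the result to at most $2^{2k-1}=O(1)$ Hermitian tensor products, each of which is a product measurement of range at most $2$. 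This shows $Z_\mathcal{M}\!\left(i^c\ketbra{\psi_a}{\psi_b}+h.c.\right)=O(1)$ elementwise; combined with Lemma 2 applied to an orthonormalization of the neighborhood basis, it recovers the same polynomial bound.

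I expect the main obstacle to be exactly the orthonormalization / Gram-matrix step, not the per-pair decomposition. The tensor product argument of the previous paragraph makes the per-element DFE cost automatically $O(1)$ since $k$ is constant, but if $\mathcal{K}_{\text{local}}$ is chosen pathologically so that some generated product states are nearly collinear, the Gram-Schmidt coefficients $c_{ab}$ can blow up and amplify the cost; controlling this through $\norm{G^+}_2$, and arguing that the relevant physical choices of local generators keep it polynomially bounded, is the delicate part of the proof.
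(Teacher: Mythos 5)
Your per-element decomposition is exactly the paper's argument: two neighborhood basis states $\ket{\psi_a},\ket{\psi_b}$ are product states differing on at most $2k$ sites, so $i^c\ketbra{\psi_a}{\psi_b}+h.c.$ is nontrivial on only $O(k)$ parties and splits into $2^{O(k)}$ Hermitian product measurements of bounded range, giving $Z_{\mathcal{M}}=2^{O(k)}=O(1)$ per matrix element, after which Lemmas 2 and 3 finish the job. The genuine gap is the step you yourself flag and leave unresolved: you propose to orthonormalize the spanning set $\{\prod_j K_{a_j}\ket{\psi}\}$ globally (Gram--Schmidt, or the $\norm{G^+}_2$ branch of Theorem \ref{th:1}), and correctly observe that nearly collinear generated states would blow up the coefficients. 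If that were the only route, the corollary would need a conditioning hypothesis on $\mathcal{K}_{local}$ that it does not have, so as written your proof does not establish the unconditional statement.

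The missing idea is to orthonormalize \emph{sitewise} rather than globally. At each site $i$, collect the single-site states reachable from $\ket{\psi_i}$ by at most $k$ applications of the local generators, and Gram--Schmidt \emph{that} set into a local orthonormal basis $\{\ket{e^{(i)}_0}=\ket{\psi_i},\ket{e^{(i)}_1},\dots\}$; this is a fixed, purely classical linear-algebra step on a $\poly(m)$-size set and costs no measurements. The tensor products $\bigotimes_i\ket{e^{(i)}_{c_i}}$ with at most $k$ indices $c_i\neq 0$ form an exactly orthonormal basis of product states for a $\poly(m)$-dimensional subspace \emph{containing} $N_k(\ket{\psi},\mathcal{K}_{local})$, and any two of them still agree with $\projector{\psi_i}$ on at least $m-2k$ sites, so your own $2^{O(k)}$ bound applies verbatim to every matrix element. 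This is the orthonormal product basis the paper's proof asserts; it removes any dependence on $\norm{G^+}_2$ (one performs tomography on the slightly larger, perfectly conditioned subspace and restricts classically afterward), so no genericity assumption on the generators is needed. Your first route through Theorem \ref{th:1} alone cannot be repaired without such an assumption and should be dropped in favor of this construction.
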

There exists an orthonormal basis of the neighborhood subspace where every basis state is a product state, and any $\ketbra{\psi_a}{\psi_b}$ is a projection to a single party pure state in at least $m-2k$ parties. Since only maximally $2k$ modes involve nontrivial measurements, $Z=2^{O(k)}=O(1)$. For example, our method is demonstrated experimentally to be effective in a multimode bosonic system on the subspace of maximally one photon among up to five modes~\cite{He2024}. This corresponds to $N_1(\ket{0}^{\otimes m},\{\hat{a}^\dagger_i\})$, which is $m+1$ dimensional. Off-diagonal terms are of the form of $\projector{0}^{\otimes m-1}\otimes\ketbra{1}{0}$ or $\projector{0}^{\otimes m-2}\otimes\ketbra{1}{0}\otimes\ketbra{0}{1}$, which were estimated by projecting $\projector{0}$ modes to vacuum and performing Wigner tomography in the remaining modes. This is listed as the second column in TABLE \ref{table:1} \& the third column in TABLE \ref{table:2}.

With some multimode controls, one can extend the set of DFE-efficient states further to include MPS:
\begin{corollary}
\label{cr:5}
If arbitrary quasilocal unitaries (i.e. arbitrary gates between any two subsystems) and single qubit measurements are allowed, tomography of any polynomial subspace spanned by MPSs with a fixed maximal bond dimension $k=O(1)$ can be performed efficiently with $O(1)$ ancilla qubits and $O(m)$ arbitrary two-qubit gates per measurement.
\end{corollary}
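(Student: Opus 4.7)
The plan is to apply Theorem~\ref{th:1} with base state $\ket{0}^{\otimes m}$ (augmented by an $O(1)$-qubit ancilla register) and with generators given by MPS-preparation unitaries. First I would invoke the standard sequential preparation of MPSs: every MPS of bond dimension $k$ is realized by $V_a\ket{0}^{\otimes (m+\lceil\log_2 k\rceil)}$, where $V_a$ is a sequential circuit of $m$ two-qubit gates, each acting between one physical qubit and the constant-size ancilla register. Letting $\{V_a\}_{a=1}^{d}$ with $d=O(\poly(m))$ denote the preparation circuits for a spanning set of the polynomial subspace $\mathcal S$, one has $\mathcal S=N_1(\ket{0}^{\otimes m},\{V_a\})$, with the ancilla qubits carried along as part of the register.

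Next I would take $\mathcal M'$ to be single-qubit computational-basis measurements, so that the projector onto the all-zero string on physical-plus-ancilla qubits is a product of single-qubit $\projector{0}$ outcomes with $Z_{\mathcal M'}=1$. Theorem~\ref{th:1} then guarantees that the Hadamard-test-expanded set $\mathcal M$, obtained by prepending one or two controlled-$V_a$ operations conditioned on ancilla qubits prepared in $\ket{+}$, permits efficient DFE of every neighbourhood-basis transition $V_a\projector{0^m}V_b^\dagger+\mathrm{h.c.}$ with cost $O(\poly(m,\norm{G^+}_2))$, where $G_{ab}=\braket{V_a 0^m}{V_b 0^m}$ is the Gram matrix of the spanning MPSs. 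Each measurement uses one or two extra Hadamard-test ancillas on top of the $O(1)$ MPS ancillas, and the controlled $V_a$ decomposes into $O(m)$ two-qubit gates via standard controlled-gate compilation. Lemma~3 then converts this per-element efficient DFE into full subspace tomography via DEMESST, meeting the stated $O(1)$-ancilla, $O(m)$-gate budget per measurement.

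The one remaining ingredient is well-conditionedness of $G$. Generically the spanning MPSs can be chosen so that $\norm{G^+}_2=O(\poly(m))$; otherwise one first orthogonalizes $\{V_a\ket{0^m}\}$ by diagonalizing the $d\times d$ classical matrix $G$ and works with the resulting orthonormal basis, whose DFE cost inherits the Theorem~\ref{th:1} bound up to a polynomial factor from the orthogonalization coefficients. I expect this Gram-matrix bookkeeping to be the main technical hurdle: without some control on linear dependence of the spanning MPSs, $\norm{G^+}_2$ can in principle grow faster than any polynomial, so the corollary effectively applies to polynomial subspaces admitting a well-conditioned polynomial MPS basis.
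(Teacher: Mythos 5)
Your proof is essentially correct in its resource accounting, but it takes a genuinely different route from the paper's. The paper does not invoke Theorem~\ref{th:1} or Hadamard tests at all here. Its key observation is that the superposition of two MPSs of bond dimension $k$ is itself an MPS of bond dimension $2k$, and (by the sequential-disentangling result of Sch\"on et al.) any such state can be mapped to a product state by $m$ uncontrolled quasilocal unitaries acting between each physical qubit and a $(1+\lceil\log_2 k\rceil)$-qubit ancilla. Hence the projector onto any single spanning MPS \emph{and} onto any two-term superposition is directly implementable, giving $Z=1$; the off-diagonal operators $i^c\ketbra{\psi_a}{\psi_b}+h.c.$ then follow from the decomposition in Lemma~2 as combinations of such projectors. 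This is why TABLE~\ref{table:1} lists the cost as exactly $1$, and why no controlled gates are needed. Your construction instead treats the preparation circuits $V_a$ as neighborhood generators of $\ket{0}^{\otimes m}$ and pays for the off-diagonals with Hadamard tests on controlled-$V_a$; this still fits within $O(1)$ ancillas and $O(m)$ two-qubit gates after compiling each controlled quasilocal gate, but it yields a larger (polynomial rather than unit) cost constant and requires controlled versions of every gate in the preparation circuits, which the paper's route avoids entirely.

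One point of your analysis is slightly misplaced: the first conclusion of Theorem~\ref{th:1} bounds $Z_{\mathcal M}$ of the neighborhood-basis transition operators $V_a\projector{0^m}V_b^\dagger+h.c.$ \emph{without} any dependence on $\norm{G^+}_2$; the Gram inverse enters only when bounding the DFE cost of an arbitrary normalized state in the neighborhood, or when classically reconstructing $P_{\mathcal S}\rho P_{\mathcal S}$ from matrix elements taken in a non-orthogonal spanning set. So the per-measurement cost of your scheme is fine regardless of conditioning; the ill-conditioning concern affects only the classical post-processing (error amplification by $\norm{G^+}_2$ when converting to an orthonormal basis), and that caveat applies equally to the paper's version of the proof. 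It is a legitimate caveat to flag, but it is not specific to your route.
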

This follows from~\cite{PhysRevA.75.032311}, which showed that any MPS state with bond dimension $k$ can be efficiently converted to a product state with quasilocal unitaries, namely arbitrary unitaries between any specified qubit and a $k$-dimensional ancilla system. Hence, the quasilocal-DFE scale factor of MPS with a small bond dimension is $1$. Since the superposition of two MPS states of bond dimension $k$ can be written as a MPS state of bond dimension $2k$, we can perform DFE of it efficiently with $1+\lceil\log_2{k}\rceil$ ancilla qubits and arbitrary $(2+\lceil\log_2{k}\rceil)$-qubit unitaries. In particular,
\begin{equation}
\begin{split}
\mathcal{M}=&\left\{\left[\left(\prod_{i=1}^mU_i\right)^\dagger M\left(\prod_{i=1}^mU_i\right)\right]\right|M\in \mathcal{M}_{prod},\\
&\left.U_i\text{ is a unitary between }i^{th}\text{ qubit and the ancillas}\right\}.
\end{split}
\end{equation}
Here $\mathcal{M}_{prod}$ indicates the set of product measurements. From that, we can extract the off-diagonal components and construct the projected density operator efficiently. This is listed as the third column in TABLE \ref{table:1}. Compared to MPS tomography~\cite{cramer_efficient_2010,lanyon_efficient_2017}, our scheme is limited to predetermined reference states, but allows fidelity estimation even when the infidelity is large. In particular, even when the physical state requires intractably large bond dimensional MPO to describe, our scheme can still be used to determine the fidelity with the reference states.

Similarly, with arbitrary Gaussian unitaries, any pure Gaussian state in a multimode bosonic system can be efficiently converted to a vacuum state. Hence, we have the following, which is listed as the fourth column in TABLE \ref{table:2}:
\begin{corollary}
\label{cr:6}
If arbitrary Gaussian unitaries and single-mode measurements are allowed, DFE of arbitrary pure Gaussian states is efficient.
\end{corollary}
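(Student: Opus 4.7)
The plan is to reduce DFE of an arbitrary pure Gaussian state to DFE of the multimode vacuum via a Gaussian change of basis, paralleling the MPS argument of Corollary \ref{cr:5}. First, I would invoke the Bloch--Messiah decomposition: any $m$-mode pure Gaussian state can be written as $\ket{\psi}=U_G\ket{0}^{\otimes m}$, where $U_G$ is a Gaussian unitary expressible as a finite composition of displacements, passive linear-optical transformations, and single-mode squeezers, with all ingredients fixed by the first and second moments of $\ket{\psi}$.

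Next, I would use a single allowed measurement to sample the fidelity directly. Because single-mode vacuum projection is a single-mode measurement with POVM $\{\projector{0},I-\projector{0}\}$ and outcomes in $\{0,1\}$, and because $U_G$ lies in the allowed set of Gaussian unitaries, the operator
\begin{equation*}
M=U_G\bigl(\projector{0}^{\otimes m}\bigr)U_G^\dagger=\projector{\psi}
\end{equation*}
is realized within $\mathcal{M}$; operationally, one applies $U_G^\dagger$ to $\rho$ and then projects each mode onto its vacuum. Writing $\projector{\psi}=1\cdot M+0\cdot I$ gives a decomposition with $f_1=1$, $C=0$, and POVM range $\lambda_{\max}-\lambda_{\min}=1$, so $Z_\mathcal{M}(\projector{\psi})\le 1$, which is tight. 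By the analysis of Section \ref{sec:gdfe}, this yields a DFE sample complexity polynomial in $\epsilon^{-1}$ and $\ln\delta^{-1}$ with no $m$-dependence.

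The main obstacle is really only bookkeeping: verifying that Bloch--Messiah can be implemented within the stated operation set (it can, via $O(m^2)$ beam-splitters, $m$ single-mode squeezers and $m$ displacements arranged by the Euler decomposition of the real symplectic group), and confirming that the product of single-mode vacuum projectors still qualifies as a ``single-mode measurement'' in the sense of the corollary. Once these standard ingredients are assembled, the conclusion follows as compactly as Corollaries \ref{cr:4} and \ref{cr:5}.
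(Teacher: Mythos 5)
Your proposal is correct and follows essentially the same route as the paper: the paper's argument is precisely that an arbitrary pure Gaussian state can be mapped to the multimode vacuum by a Gaussian unitary (your Bloch--Messiah step makes this explicit), after which single-mode vacuum projections realize $\projector{\psi}$ directly and give $Z_\mathcal{M}(\projector{\psi})=1$. The extra bookkeeping you flag (implementability of the decomposition and the status of the product of vacuum projectors as single-mode measurements) is handled implicitly in the paper and does not change the argument.
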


DEMESST can be extended to multiple base states, as long as DFE of any superposition of the base states can be performed efficiently. The base state condition is not bounded by limited entanglement: it only requires DFE of any superposition of the base states, which can be done efficiently if the base states are well-structured, as demonstrated by the example of stabilizer states. The relevant space would then be the combination of the individual neighborhoods of the base states.
\section{Discussion and Conclusion}
Our DEMESST framework has additional important features. 
First, DEMESST can \emph{self-verify} the polynomial neighborhood subspace assumption and, similarly, can determine the projected state even if a major part of the physical state is outside of the subspace. Second, DEMESST can work with a large class of states that can be highly entangled, as long as the DFE cost factor is tractable, without requiring any structure of the states. Furthermore, Generalized DFE can also be used to accommodate methods to determine some nonlinear properties of a density operator with joint measurement on multiple copies of the state. For example, determining purity $\tr(\rho^2)$ with joint Bell measurements of two copies of multiqubit states can be interpreted as performing the DFE of the SWAP operator with two copies of $\rho$ as the state to be sampled, since $\tr(\rho^2)=\tr(\mathrm{SWAP} \rho\otimes\rho)$.

In conclusion, we have presented a generalized version of DFE to minimize variance and shown an error bound for it. We have described DEMESST, a partial tomography of the neighborhood of a multipartite pure state with polynomial sampling complexity.
Future directions include proving or disproving efficient DFE of an arbitrary pure state with LOCC measurements, which would allow DEMESST of any polynomial subspace with a polynomial number of measurements (refer to Appendix \ref{sec:LOCC}), demonstrating or disproving a superpolynomial lower bound of cost for MPS, Gaussian states, and arbitrary pure states with more complicated measurements (refer to TABLES \ref{table:1} \& \ref{table:2}), which, if disproven, may expand the set of possible DFE-efficient states as base states we can use in practice.

\section{Acknowledgements}
We acknowledge support from the ARO(W911NF-23-1-0077), ARO MURI (W911NF-21-1-0325), AFOSR MURI (FA9550-19-1-0399, FA9550-21-1-0209, FA9550-23-1-0338), DARPA (HR0011-24-9-0359, HR0011-24-9-0361), NSF (OMA-1936118, ERC-1941583, OMA-2137642, OSI-2326767, CCF-2312755), NTT Research, Samsung GRO,  Packard Foundation (2020-71479), and the Marshall and Arlene Bennett Family Research Program. This material is based upon work supported by the U.S. Department of Energy, Office of Science, National Quantum Information Science Research Centers, and Advanced Scientific Computing Research (ASCR) program under contract number DE-AC02-06CH11357 as part of the InterQnet quantum networking project.
\newpage

\bibliography{main}
\newpage
\appendix
\section{Supplementary Materials for Section \ref{sec:gdfe}}
\subsection{DFE cost with POVM / unbounded measurements}\label{sec:povm_unbounded}
In general, one can obtain a Hoeffding's bound with POVM and unbound assigned values as long as the distribution of the measurement result is subgaussian for any physical state. The condition for a distribution of measurements to be subgaussian is to have a finite subgaussian norm \cite{Wainwright_2019}, which is defined as follows for quantum measurements:
\begin{equation}
\norm{X}_{\psi_2}=\inf_{c>0}\max_{\substack{\rho\\s.t. \sum_{i,j}p_i\tr(\rho\Lambda_j^{(i)})e^{\left(\lambda_j^{(i)}\right)^2/c^2}\leq2}}c,
\end{equation}
where $X$ is the random variable representing the measurement result, $\{\Lambda_j^{(i)}\}$ is the $i^{th}$ POVM, $p_i$ is the probability of choosing $i^{th}$ POVM, and $\lambda^{(i)}_j$ is the measurement result assigned to $\Lambda_j^{(i)}$. We can minimize over all possible distributions that realize an operator $\mathcal{O}$:
\begin{equation}
\norm{\mathcal{O}}_{\psi_2,\mathcal{L}}=\min_{\substack{\{\Lambda_j^{(i)}\}\in\mathcal{L},\{\lambda_j^{(i)}\},\{p_i\},C\\s.t. \sum_{i,j}p_i\lambda_j^{(i)}\Lambda_j^{(i)}=\mathcal{O}-CI}}\norm{X}_{\psi_2},
\end{equation}
where $\mathcal{L}$ is the set of available POVMs. For any $\mathcal{O}$ with finite $Z_{\mathcal{M}}(\mathcal{O})$, we have $Z_{\mathcal{M}}(\mathcal{O})\geq\norm{\mathcal{O}}_{\psi_2,\mathcal{L}}/\sqrt{\ln2}$, since any bounded random variable is also subgaussian. Conversely, with any finite $\norm{\mathcal{O}}_{\psi_2,\mathcal{L}}$, we have a Hoeffding bound of
\begin{equation}
\delta_t=\mathbb{P}_t\left(\abs*{\frac{1}{t}\sum_{j=1}^t X_j+C-\expval{\mathcal{O}}_\rho}\geq\epsilon\right)\leq2e^{-\Omega(t\epsilon^2\norm{\mathcal{O}}_{\psi_2,\mathcal{L}}^{-2})}.
\end{equation}
Hence, $\norm{\mathcal{O}}_{\psi_2,\mathcal{L}}$ is a generalized version of the DFE cost.
\subsection{Proofs of lemmas and theorem 1}\label{proofs}
\subsubsection{Lemma 2}
\begin{proof}
Proof for the `only if' statement:
\begin{equation}
\begin{aligned}
&i^c\ketbra{\psi_a}{\psi_b}+h.c.\\
=&2\cdot\frac{1}{2}\left(\ket{\psi_a}+i^{-c}\ket{\psi_b}\right)\left(\bra{\psi_a}+i^c\bra{\psi_b}\right)\\
&-\projector{\psi_a}-\projector{\psi_b},
\end{aligned}
\end{equation}
\begin{equation}
\begin{aligned}
&Z_\mathcal{M}(i^c\ketbra{\psi_a}{\psi_b}+h.c.)\\
\leq&2Z_\mathcal{M}\left(\frac{1}{2}\left(\ket{\psi_a}+i^{-c}\ket{\psi_b}\right)\left(\bra{\psi_a}+i^c\bra{\psi_b}\right)\right)\\
&+Z_\mathcal{M}(\projector{\psi_a})+Z_\mathcal{M}(\projector{\psi_b})\\
=&O(\poly(m)).
\end{aligned}
\end{equation}
Proof for the `if' statement:
\begin{equation}
\begin{aligned}
&Z_\mathcal{M}\left(\sum_a \alpha_a\ket{\psi_a}\sum_b \alpha_b^*\bra{\psi_b}\right)\\
\leq&\sum_{a>b}\left(Z_\mathcal{M}\left(\ketbra{\psi_a}{\psi_b}+h.c.\right)\abs{\text{Re}(\alpha_a\alpha_b^*)}\right.\\
&\left.+Z_\mathcal{M}\left(i\ketbra{\psi_a}{\psi_b}+h.c.\right)\abs{\text{Im}(\alpha_a\alpha_b^*)}\right)\\
&+\sum_aZ_\mathcal{M}(\projector{\psi_a})\abs{\alpha_a}^2\\
=&O(\poly(m)).
\end{aligned}
\end{equation}
\end{proof}
\subsubsection{Lemma 3}
\begin{proof}
Each $i^c\ketbra{\psi_a}{\psi_b}+i^{-c}\ketbra{\psi_b}{\psi_a}$ can be determined to $\epsilon'$ precision with $1-\delta'$ probability with $t=\frac{Z^2}{2\epsilon'^2}\ln\frac{2}{\delta'}=O\left(\poly(m,\ln\delta',\epsilon'^{-1})\right)$ measurements. Hence, setting $\delta'=\delta/(\dim{\mathcal{S}})^2$ and $\epsilon'=\epsilon/\dim{\mathcal{S}}$, we can determine $\rho_\mathcal{S}$ with $(\dim{\mathcal{S}})^2t=O\left(\poly(m,\ln\delta,\epsilon^{-1})\right)$ measurements.
\end{proof}
\subsubsection{Theorem 1}
\begin{proof}
The diagonal elements of the density matrix on the neighborhood basis can be estimated by rotating the basis state to the base state followed by DFE of the base state. To estimate the off-diagonal elements, we apply controlled unitaries on the subsystems with differing unitaries on both sides, and the required elements can be calculated through the difference in the results. For example, the real part of $\expval{U_a^\dagger\mathcal{O}U_b}$ can be calculated by performing a Hadamard test on controlled unitaries $U_a$ and $U_b$:
\begin{equation}
\begin{split}
\text{Re}\left(\expval{U_a^\dagger\mathcal{O}U_b}\right) ={}& \frac{1}{2}\left(\expval*{\left(\frac{U_a+U_b}{\sqrt{2}}\right)^\dagger\mathcal{O}\left(\frac{U_a+U_b}{\sqrt{2}}\right)}\right.\\
&-\left.\expval*{\left(\frac{U_a-U_b}{\sqrt{2}}\right)^\dagger\mathcal{O}\left(\frac{U_a-U_b}{\sqrt{2}}\right)}\right).
\end{split}
\end{equation}
By performing up to $2k$ Hadamard tests, followed by the original measurement, we can perform DFE for operators within the neighborhood of any DFE-efficient state.

Any pure state within the neighborhood can be written as a superposition of $\prod_{i=1}^k U_{a_i}\ket{\psi}$ with coefficients $\leq \norm{G^+}_2$, hence $Z_\mathcal{M}=O(\poly(m,\norm{G^+}_2))$.
\end{proof}
\section{LOCC Conjecture}\label{sec:LOCC}
To systematically generate a DFE of any arbitrary pure state with only LOCC measurements, we introduce a mathematical construct, which we will call  ``measurement contrast", that characterizes the sampling complexity of DFE of a Hermitian operator $\mathcal{O}$, given a set of possible bounded measurement operators. Here, LOCC measurement refers to any measurement that involves only LOCC operations.
\begin{definition}
Define the traceless operator $\widetilde{\mathcal{O}}$
\begin{equation}
\widetilde{\mathcal{O}}=\mathcal{O}-\frac{\tr(\mathcal{O})}{D}I,
\end{equation}
and the ``effective contrast" $Y$
\begin{equation}Y(\mathcal{O},M,\varsigma)=
\begin{cases}
    0 &\textbf{ if }\tr(M\tilde{\varsigma})>0,\\
    \frac{\tr\left(M\widetilde{\mathcal{O}}\right)}{\lambda_{\max}(M)-\lambda_{\min}(M)} &\textbf{ if }\tr(M\tilde{\varsigma})\leq 0,
\end{cases} 
\end{equation}
where $D$ is the dimension of the overall Hilbert space, $\lambda_{\max}$ and $\lambda_{\min}$ are the maximum and minimum value assigned to any measurement result, and $\tilde{\varsigma}=\varsigma-\frac{I}{D}\tr(\varsigma)-\frac{\widetilde{\mathcal{O}}}{\norm{\widetilde{\mathcal{O}}}_F^2}\tr\left(\varsigma\widetilde{\mathcal{O}}\right)$. Here $M$ is a measurement operator, while $\varsigma$ is some Hermitian operator that will be determined later.
For any Hermitian operator $\mathcal{O}$ and the set of measureable operators $\mathcal{M}$, we define the measurement contrast to be
\begin{equation}\label{eq:1}
Y_{\mathcal{M}}(\mathcal{O})=\min_{\varsigma\in\mathcal{H}}\max_{M\in\mathcal{M}}Y(\mathcal{O},M,\varsigma),
\end{equation}
where $\mathcal{H}$ is the set of all Hermitian operators. If this value is positive, it implies that for any $\varsigma$, there exists some $M\in\mathcal{M}$ such that $Y(\mathcal{O},M,\varsigma)\geq Y_{\mathcal{M}}(\mathcal{O})$.
\end{definition}
This measurement contrast, which we will demonstrate in the following theorem, indicates that there exists a random ensemble of measurements from $\mathcal{M}$ such that the expected value is proportional to the expected value of $\mathcal{O}$ with a proportional constant of at least $Y_\mathcal{M} (\mathcal{O})$.
\begin{theorem}\label{th:2}
Given a set of possible measurable operators $\mathcal{M}$, if $Y_\mathcal{M}(\mathcal{O})>0$, then $Z_\mathcal{M}(\mathcal{O})\leq \norm{\widetilde{\mathcal{O}}}_F^2/Y_\mathcal{M}(\mathcal{O})$.
\end{theorem}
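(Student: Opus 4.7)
The plan is to derive Theorem~\ref{th:2} from linear programming duality. Observe that $Z_\mathcal{M}(\mathcal{O})$ is already the value of a linear program: after substituting $f_i = f_i^+ - f_i^-$ with $f_i^\pm \geq 0$, the defining minimization becomes: minimize $\sum_i (\lambda_{\max}^{(i)} - \lambda_{\min}^{(i)}) (f_i^+ + f_i^-)$ subject to the Hermitian equality constraint $\sum_i (f_i^+ - f_i^-) M_i + CI = \mathcal{O}$. Taking the LP dual, with a Hermitian multiplier $\Lambda$ for the equality constraint, gives: maximize $\tr(\Lambda\mathcal{O})$ subject to $\tr(\Lambda)=0$ (from variation in $C$) and $|\tr(\Lambda M)| \leq \lambda_{\max}(M) - \lambda_{\min}(M)$ for every $M \in \mathcal{M}$ (from variation in $f_i^\pm$). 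Since $\tr(\Lambda) = 0$, the dual objective equals $\tr(\Lambda\widetilde{\mathcal{O}})$, so once strong duality is in hand, the theorem reduces to upper bounding this dual optimum by $\norm{\widetilde{\mathcal{O}}}_F^2/Y_\mathcal{M}(\mathcal{O})$.

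To obtain that upper bound, I would take any dual-feasible $\Lambda^*$ attaining value $z^* := \tr(\Lambda^*\widetilde{\mathcal{O}})$ and feed $\varsigma := -\Lambda^*$ into the definition of $Y_\mathcal{M}(\mathcal{O})$. A direct computation using $\tr(\Lambda^*) = 0$ gives $\tilde{\varsigma} = -\Lambda^* + (z^*/\norm{\widetilde{\mathcal{O}}}_F^2)\widetilde{\mathcal{O}}$. Since $Y_\mathcal{M}(\mathcal{O}) > 0$, there exists $M \in \mathcal{M}$ with $\tr(M\tilde{\varsigma})\leq 0$ and $\tr(M\widetilde{\mathcal{O}})/(\lambda_{\max}(M)-\lambda_{\min}(M)) \geq Y_\mathcal{M}(\mathcal{O})$. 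Rearranging the first inequality yields $\tr(M\Lambda^*) \geq (z^*/\norm{\widetilde{\mathcal{O}}}_F^2)\tr(M\widetilde{\mathcal{O}})$, and combining this with the dual feasibility bound $\tr(M\Lambda^*) \leq \lambda_{\max}(M)-\lambda_{\min}(M)$ together with the contrast lower bound on $\tr(M\widetilde{\mathcal{O}})$, then dividing by the positive quantity $\lambda_{\max}(M)-\lambda_{\min}(M)$, produces the desired inequality $z^* \leq \norm{\widetilde{\mathcal{O}}}_F^2/Y_\mathcal{M}(\mathcal{O})$.

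The main technical obstacle is justifying strong LP duality when $\mathcal{M}$ is large or continuous, as in the applications where $\mathcal{M}$ is the set of all product measurements or LOCC measurements. For finite $\mathcal{M}$ classical LP duality applies immediately, but in the general case I would either invoke the semi-infinite LP framework (verifying a Slater-type interior condition, which holds whenever $\mathcal{O}$ lies in the relative interior of the attainable cone) or argue by reduction: by finite-dimensionality of the Hermitian operator space, any optimal primal decomposition uses only finitely many $M_i$, so one can restrict $\mathcal{M}$ to that finite subset and invoke classical duality there. I would separately handle the degenerate cases $\widetilde{\mathcal{O}} = 0$ (where both $Z_\mathcal{M}$ and $Y_\mathcal{M}$ vanish and the statement is trivial) and $z^* < 0$ (which is ruled out since $Z_\mathcal{M}$ behaves as a norm and $\Lambda^* = 0$ is always dual-feasible with objective zero), so that the sign conventions in the chain of inequalities above are consistent.
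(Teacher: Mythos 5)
Your route is genuinely different from the paper's. The paper proves Theorem~\ref{th:2} constructively: it runs an iterative (Frank--Wolfe/perceptron-style) scheme that, at step $n$, uses the positivity of $Y_\mathcal{M}(\mathcal{O})$ applied to the current deviation $\varsigma_n$ to find a measurement $M_{n+1}$ with $\tr(\varsigma_n\Delta_{n+1})\leq 0$ and $\tr(\widetilde{M}_{n+1}\widetilde{\mathcal{O}})\geq Y_\mathcal{M}(\mathcal{O})$, then reweights so that $\norm{\varsigma_{n}}_F^{-2}$ grows additively, giving $\norm{\varsigma_n}_F<\sqrt{D/n}\to 0$; the resulting ensemble realizes $\widetilde{\mathcal{O}}$ with $Z\leq\norm{\widetilde{\mathcal{O}}}_F^2/Y_\mathcal{M}(\mathcal{O})$ because every term in the normalization has overlap at least $Y_\mathcal{M}(\mathcal{O})$. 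You instead recognize $Z_\mathcal{M}$ as the value of an LP, pass to the dual, and bound the dual optimum by feeding $\varsigma=-\Lambda^*$ into the definition of $Y_\mathcal{M}$; your computation of $\tilde\varsigma$ and the chain of inequalities yielding $z^*\leq\norm{\widetilde{\mathcal{O}}}_F^2/Y_\mathcal{M}(\mathcal{O})$ is correct, and it exposes the fact that the definition of $Y_\mathcal{M}$ is precisely a separation oracle for the dual feasible set. What the paper's approach buys is that it never needs strong duality: it only invokes the existence of a single contrast witness per iteration, so it works for arbitrary (non-closed, non-convex) $\mathcal{M}$ and doubles as an explicit algorithm for building the sampling distribution. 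What your approach buys is brevity and the weak-duality direction for free (dual value $\leq Z_\mathcal{M}$), which the paper does not discuss.

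The one step you should repair is your second fallback for strong duality. Restricting to the finitely many measurements appearing in an (assumed) optimal primal decomposition does not help: for a finite subset $\mathcal{M}'\subseteq\mathcal{M}$ the dual feasible region is \emph{larger} (fewer constraints), so the restricted dual optimum can exceed the full one, and your bound on the dual optimum requires the contrast witness $M$ to be drawn from all of $\mathcal{M}$, not from $\mathcal{M}'$; moreover the infimum defining $Z_\mathcal{M}$ need not be attained, so the "optimal decomposition" may not exist. Your first fallback is the right one for the regimes where the theorem is applied: for a finite-dimensional Hilbert space the dual is a convex program in the finite-dimensional variable $\Lambda$ with $\Lambda=0$ strictly feasible (Slater), and boundedness of the dual (which you establish) then gives zero duality gap with primal attainment as a measure over $\mathcal{M}$, reducible to finite support by Carath\'eodory. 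Note that for infinite-dimensional (bosonic) systems both proofs degrade --- the paper's bound $\norm{\varsigma_n}_F<\sqrt{D/n}$ is vacuous when $D=\infty$ --- so neither argument is more general there.
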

\begin{proof}
In this proof, we use an iterative approach to generate an ensemble of measurements to approximate $\mathcal{O}$ with a systematic error that limits to 0 as the iteration number increases. For an ensemble of $M_i$ and the corresponding probability distribution $p_i^{(n)}$ in the $n$-th iteration, we define
\begin{equation}
\left\{\begin{aligned}
&\widetilde{M}_i=\frac{M_i-\frac{\tr(M_i)}{D}I}{\lambda_{\max}(M)-\lambda_{\min}(M)},\\
&\Delta_i=\widetilde{M}_i-\frac{\tr\left(\widetilde{M}_i\widetilde{\mathcal{O}}\right)}{\norm{\widetilde{\mathcal{O}}}_F^2}\widetilde{\mathcal{O}}=\widetilde{M}_i-\frac{Y(\mathcal{O},M_i,\varsigma_{i-1})}{\norm{\widetilde{\mathcal{O}}}_F^2}\widetilde{\mathcal{O}},\\
&\varsigma_n=\sum_ip^{(n)}_i\Delta_i.\end{aligned}\right.
\end{equation}
Here $\Delta_i$ is the deviation for the measurement $M_i$ from approximating $\mathcal{O}$, while $\varsigma_n$ is the deviation of the ensemble at step $n$. We can see that $\Delta$ and $\varsigma$ have zero traces and zero overlaps with $\mathcal{O}$. Moreover,
\begin{equation}\norm{\Delta_i}_F\leq\norm{\widetilde{M}_i}_F<\sqrt{D}.\end{equation}
Now we discuss how we generate the ensemble of measurement $\{M_i\}$ and the corresponding probability $p^{(n)}_i$ for each operator $M_i$ iteratively. First, we set $\varsigma_0=0$. From Eq.~\eqref{eq:1}, there exists $M_1$ with $\tr\left(\widetilde{M}_1\widetilde{\mathcal{O}}\right)\geq Y_\mathcal{M}(\mathcal{O})$. We set $p_1^{(1)}=1$ accordingly, leading to the deviation $\varsigma_1=\Delta_1$. 

Then, after finishing the $n$-th iteration, we will update $\{M_i\}$ and $p^{(n+1)}_i$ in the following way. Given $\varsigma_n$, from Eq.~\eqref{eq:1}, there exists $M_{n+1}$ such that $0\geq\tr(\varsigma_nM_{n+1})=\tr(\varsigma_n\Delta_{n+1})$ with $\tr\left(\widetilde{M}_{n+1}\widetilde{\mathcal{O}}\right)\geq Y_\mathcal{M}(\mathcal{O})$. Set $p_{n+1}^{(n+1)}=\frac{\norm{\Delta_{n+1}}_F^2}{\norm{\Delta_{n+1}}_F^2+\norm{\varsigma_n}_F^2}$ and $p_i^{(n+1)}=\left(1-p_{n+1}^{(n+1)}\right)p_i^{(n)}$, then the deviation will be $\varsigma_{n+1} = p_{n+1}^{(n+1)}\Delta_{n+1}+\left(1-p_{n+1}^{(n+1)}\right)\varsigma_n$, which satisfies $\norm{\varsigma_{n+1}}_F^{-2}\geq\norm{\Delta_{n+1}}_F^{-2}+\norm{\varsigma_n}_F^{-2}$. From this, we have a bound of the Frobenius norm of this new $\varsigma$:
\begin{equation}
\norm{\varsigma_{n+1}}_F\leq\left(\sum_i\norm{\Delta_i}^{-2}\right)^{-1/2}<\sqrt{D/(n+1)}.
\end{equation}
In the limit of $n$ going to $\infty$, $\lim_{n\rightarrow\infty}\norm{\varsigma_n}_F\leq\lim_{n\rightarrow\infty}\sqrt{D/n}=0$. Therefore,
\begin{equation}
\lim_{n\rightarrow\infty}\left(\sum_ip_i^{(n)}\widetilde{M}_i-\sum_ip_i^{(n)}\frac{\tr\left(\widetilde{M}_i\widetilde{\mathcal{O}}\right)}{\norm{\widetilde{\mathcal{O}}}_F^2}\widetilde{\mathcal{O}}\right)=0.
\end{equation}
Hence, we demonstrated there exists an ensemble to approximate $\mathcal{O}$:
\begin{equation}Z\sum_ip^{(n)}_i\widetilde{M}_i+CI=\mathcal{O},
\end{equation}
with
\begin{equation}Z=\left(\sum_ip_i^{(n)}\frac{\tr\left(\widetilde{M}_i\widetilde{\mathcal{O}}\right)}{\norm{\widetilde{\mathcal{O}}}_F^2}\right)^{-1}\leq\frac{\norm{\widetilde{\mathcal{O}}}_F^2}{Y_\mathcal{M}(\widetilde{\mathcal{O}})},
\end{equation}
and
\begin{equation}C=\frac{\tr\left(\mathcal{O}\right)}{D}.
\end{equation}
\end{proof}
With this framework, we conjecture the following, which, when combined with Theorem \ref{th:2}, is a sufficient condition for DFE with only LOCC measurements to be efficient for any pure state.
\begin{conjecture}\label{cj:1}
For any Hermitian operator $\mathcal{O}$ in an $m$-qubit system, there exists an LOCC measurement $M$ with $\lambda_{\max}=1$ and $\lambda_{\min}=-1$ such that
\begin{equation}
\tr{(M)}=0,
\end{equation}
\begin{equation}
\tr{\left(M\widetilde{\mathcal{O}}\right)}\geq\left(1-2^{-m}\right)^{-1}\norm{\widetilde{\mathcal{O}}}_2.
\end{equation}
\end{conjecture}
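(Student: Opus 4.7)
The plan is to prove the conjecture via a reduction to a stabilizer-type subgoal followed by an explicit adaptive LOCC construction. Writing $M = 2\Lambda - I$ for a PSD $\Lambda$ that is the ``accept'' element of a two-outcome LOCC POVM, the hypotheses translate to $\lambda_{\max}(\Lambda) = 1$, $\tr\Lambda = 2^{m-1}$, and $\tr(\Lambda\widetilde{\mathcal{O}}) \geq \frac{2^{m-1}}{2^m-1}\norm{\widetilde{\mathcal{O}}}_2$. The bound is tight precisely on rank-one perturbations $\widetilde{\mathcal{O}} = \projector{v} - I/2^m$, where $\norm{\widetilde{\mathcal{O}}}_2 = 1-2^{-m}$ and $\tr(M\widetilde{\mathcal{O}}) = \bra{v}M\ket{v}$, so the inequality collapses to $\bra{v}M\ket{v}=1$. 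This isolates the following subgoal: every pure $m$-qubit state $\ket{v}$ should be a $+1$ eigenvector of some LOCC measurement with balanced $\pm 1$ spectrum and vanishing trace.

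I would attack the subgoal constructively, starting from the easy cases. Stabilizer states are immediate, since any nontrivial element of the stabilizer group is a Pauli string with the required properties. Across any bipartition $A|B$, Schmidt decomposition $\ket{v}=\sum_i s_i\ket{a_i}\ket{b_i}$ followed by a local change of basis $U_A\otimes U_B$ mapping the Schmidt bases to a fixed computational basis makes $M_A\otimes M_A$ a stabilizer of the aligned state, where $M_A$ is any traceless involution diagonal in that basis; rotating back yields a valid bipartite LOCC $M$. For multipartite $\ket{v}$ across single-qubit parties, I would iterate this sequentially via nested bipartitions $\{1\}|\{2,\ldots,m\}, \{2\}|\{3,\ldots,m\},\ldots$, using classical communication to condition each subsequent single-qubit measurement basis on earlier outcomes, building $M$ as a tree-structured correlated product. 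To lift from the extremal rank-one case to generic $\widetilde{\mathcal{O}}$, I would pick $\ket{v}$ as the dominant eigenvector and exploit the strict slack $\norm{\widetilde{\mathcal{O}}}_2 < 1-2^{-m}$ available when the spectrum is spread, supplementing with an adaptive LOCC protocol that peels off the dominant eigenvector and recurses on the residual operator.

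The hard part, and the reason this remains a conjecture, is that the multipartite recursion is not a direct lift of the bipartite Schmidt construction: on the inner party $\{2,\ldots,m\}$ one needs not merely to stabilize a single state but to implement an LOCC measurement whose $\pm 1$ eigenspaces are assigned to specific orthogonal Schmidt vectors of the outer cut, which is itself a strictly stronger recursive subgoal. When $\ket{v}$ is highly entangled across many cuts simultaneously, or when $\widetilde{\mathcal{O}}$ is concentrated on entanglement witnesses whose eigenvectors have mutually incompatible Schmidt structures, LOCC cannot resolve all of them at once, and the conjecture tolerates only an additive slack of $2^{-m}\norm{\widetilde{\mathcal{O}}}_2$ accumulated over the $m$ rounds. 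Closing this gap tightly would likely require either a convex-geometric characterization of the polar dual of the LOCC polytope inside the traceless Hermitians, or a probabilistic argument using Haar-random local bases combined with a derandomization step, neither of which is immediate from current techniques.
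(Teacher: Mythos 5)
There is no proof in the paper to compare against: the statement you are attacking is explicitly labeled a \emph{conjecture}, is listed in the conclusion as an open problem, and the paper only proves the conditional corollary that \emph{if} it holds then $Z_{LOCC}(\projector{\psi})\leq 2-2^{1-m}$. Your submission is likewise not a proof but a plan of attack with an honestly flagged gap, so the conjecture remains open either way.

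That said, the parts you do work out are sound and worth recording. The reduction of the extremal case is correct: for $\widetilde{\mathcal{O}}=\projector{v}-2^{-m}I$ the traceless condition gives $\tr(M\widetilde{\mathcal{O}})=\braketOP{v}{M}{v}$ and the right-hand side equals $1$, so with $-I\leq M\leq I$ the inequality forces $M$ to accept $\ket{v}$ with certainty. Your bipartite construction also checks out: in the Schmidt-aligned basis, $M_A\otimes M_B$ with matched $\pm1$ diagonal involutions stabilizes $\sum_i s_i\ket{ii}$ and has zero trace, which for two qubits reproduces (indeed slightly sharpens) the spirit of the paper's closing example. However, two essential steps are missing, and you name only one of them. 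First, the multipartite recursion: conditioning the inner party's measurement on the outer outcome requires the inner measurement to \emph{distinguish} orthogonal Schmidt vectors by LOCC while remaining traceless and $\pm1$-valued, which is strictly harder than stabilizing a single state and is exactly where known LOCC indistinguishability obstructions live. Second, and you pass over this too quickly, the corollary does not apply the conjecture to the rank-one operator but to the perturbed operator $\left(1-2^{-m}\right)^{-1}\projector{\psi}-\left(2^m-1\right)^{-1}I-\epsilon\varsigma$ with $\varsigma$ an \emph{arbitrary} traceless Hermitian operator annihilating $\ket{\psi}$; so even a complete solution of your rank-one subgoal would not suffice. You would need the stabilizing LOCC observable to exist for every such perturbation direction, i.e.\ a statement about the polar dual of the LOCC observable set, which is precisely the convex-geometric characterization you defer to future work. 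As it stands, your proposal is a reasonable research program, not a proof, and it neither closes nor contradicts the paper's open conjecture.
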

\begin{corollary}
If Conjecture \ref{cj:1} holds, then for any pure state $\ket{\psi}$, $Z_{LOCC}(\projector{\psi})\leq2-2^{1-m}$.
\end{corollary}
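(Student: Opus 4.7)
The plan is to reduce the corollary, via Theorem~\ref{th:2}, to the single claim $Y_{LOCC}(\projector{\psi}) \geq 1/2$. A direct computation gives $\norm{\widetilde{\projector{\psi}}}_F^2 = 1 - 2^{-m}$ and $\norm{\widetilde{\projector{\psi}}}_2 = 1 - 2^{-m}$, so once the $Y$-bound is in hand, Theorem~\ref{th:2} delivers $Z_{LOCC}(\projector{\psi}) \leq (1-2^{-m})/(1/2) = 2 - 2^{1-m}$, exactly as claimed.

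To prove $Y_{LOCC}(\projector{\psi}) \geq 1/2$, I would fix an arbitrary Hermitian $\varsigma$. Since $Y(\projector{\psi}, M, \varsigma)$ depends on $\varsigma$ only through $\tilde{\varsigma}$, I may assume without loss of generality that $\varsigma$ is already traceless and orthogonal to $\widetilde{\projector{\psi}}$, so that $\tilde{\varsigma} = \varsigma$ and $\bra{\psi}\varsigma\ket{\psi} = 0$. The core idea is to apply Conjecture~\ref{cj:1} not to $\projector{\psi}$ itself, but to the perturbed operator $\mathcal{O}_\epsilon = \projector{\psi} - \epsilon\varsigma$ for small $\epsilon > 0$. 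Because $\bra{\psi}\widetilde{\mathcal{O}_\epsilon}\ket{\psi} = 1 - 2^{-m}$, the operator norm obeys $\norm{\widetilde{\mathcal{O}_\epsilon}}_2 \geq 1 - 2^{-m}$, so the conjecture furnishes an LOCC measurement $M_\epsilon$ that is traceless, has $\pm 1$ spectrum, and satisfies $\tr(M_\epsilon\widetilde{\projector{\psi}}) - \epsilon\tr(M_\epsilon\varsigma) \geq 1$. Combined with the elementary upper bound $\tr(M_\epsilon\widetilde{\projector{\psi}}) = \bra{\psi}M_\epsilon\ket{\psi} \leq 1$, this single inequality does double duty: it forces $\tr(M_\epsilon\varsigma) \leq 0$ (so the $\tilde{\varsigma}$-constraint is satisfied automatically), and, rearranged, yields $\bra{\psi}M_\epsilon\ket{\psi} \geq 1 - \epsilon\norm{\varsigma}_1$.

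The final step is to pass to the limit $\epsilon \to 0^+$. By sequential compactness of the set of LOCC measurements with fixed $\pm 1$ outcome structure (a closed, bounded subset of Hermitian operators on a finite-dimensional Hilbert space), a subsequence of $\{M_\epsilon\}$ converges to a limiting LOCC $M$ with $\tr(M\varsigma) \leq 0$ and $\bra{\psi}M\ket{\psi} = 1$, witnessing $Y(\projector{\psi}, M, \varsigma) = 1/2$. Equivalently, if one reads the outer extremum in the definition of $Y_\mathcal{M}$ as a supremum rather than a maximum, the family $\{M_\epsilon\}$ itself already certifies $\sup_M Y \geq 1/2$, without any limiting argument. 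I expect the compactness/sup-versus-max point to be the only subtle step; apart from that, the corollary is a direct application of Conjecture~\ref{cj:1} to a perturbation of $\projector{\psi}$ that simultaneously encodes the target operator and the $\varsigma$-constraint.
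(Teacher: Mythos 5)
Your proposal is correct and follows essentially the same route as the paper's proof: both apply Conjecture~\ref{cj:1} to a perturbation of $\widetilde{\projector{\psi}}$ by $-\epsilon\varsigma$, use $\braketOP{\psi}{M}{\psi}\leq 1$ to force $\tr(M\varsigma)\leq 0$ and obtain $Y\geq\tfrac{1}{2}(1-\epsilon\norm{\varsigma}_1)$, then let $\epsilon\to 0^+$ and invoke Theorem~\ref{th:2}. Your version is marginally tidier in two respects—you reduce to general $\varsigma$ with $\tr(\varsigma)=0$ and $\braketOP{\psi}{\varsigma}{\psi}=0$ rather than the paper's stronger assumption $\varsigma\ket{\psi}=0$, and you only need a lower bound on $\norm{\widetilde{\mathcal{O}_\epsilon}}_2$ via a diagonal element rather than an exact normalization—while the sup-versus-max reading you flag is exactly the one the paper implicitly adopts.
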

\begin{proof}
Consider $\mathcal{O}=\widetilde{\mathcal{O}}=\left(1-2^{-m}\right)^{-1}\projector{\psi}-\left(2^m-1\right)^{-1}I-\epsilon\varsigma$, where $\varsigma$ is a traceless Hermitian operator satisfying $\varsigma\ket{\psi}=0$ and $\epsilon$ is a positive number satisfying $\epsilon\norm{\varsigma}_2\leq1-\left(2^m-1\right)^{-1}$, we have $\norm{\widetilde{\mathcal{O}}}_2=1$. From conjecture \ref{cj:1}, there exists $M$ such that $\tr(M)=0$ and
\begin{equation}
\begin{split}
\tr{\left(M\left(\widetilde{\mathcal{O}}+\epsilon\varsigma\right)\right)}&=\frac{\braketOP{\psi}{M}{\psi}}{1-2^{-m}}-\frac{\tr{(M)}}{2^m-1}\\
&\leq\frac{\norm{\widetilde{\mathcal{O}}}_2}{1-2^{-m}}\leq\tr\left(M\widetilde{\mathcal{O}}\right).
\end{split}
\end{equation}
Hence, we have
\begin{equation}
\tr{(M\varsigma)}\leq 0.
\end{equation}
\begin{equation}
\begin{split}
\frac{\tr\left(M\left(\projector{\psi}-2^{-m}I\right)\right)}{\lambda_{\max}(M)-\lambda_{\min}(M)}&=\frac{1-2^{-m}}{2}\tr{\left(M\left(\widetilde{\mathcal{O}}+\epsilon\varsigma\right)\right)}\\
&\geq\frac{1}{2}+\frac{1-2^{-m}}{2}\epsilon\tr\left(M\varsigma\right)\\
&>\frac{1}{2}\left(1-\epsilon\norm{\varsigma}_1\right).
\end{split}
\end{equation}
Therefore, for any positive $\epsilon$ we have $Y(\projector{\psi},M,\varsigma)>\frac{1}{2}\left(1-\epsilon\norm{\varsigma}_1\right)$. Since this holds for any sufficiently small $\epsilon$, we must have $Y_{LOCC}(\projector{\psi})\geq 1/2$. From Theorem \ref{th:2}, we thus have
\begin{equation}
Z_{LOCC}(\projector{\psi})\leq 2\norm{\projector{\psi}-2^{-m}I}_F^2=2-2^{1-m}.
\end{equation}
\end{proof}
For example, the DFE cost of any two-qubit entangled pure state with LOCC measurements is at most the same as that of a Bell pair: Any two-qubit pure state is equivalent to the Schmidt form up to single-qubit unitaries, hence without loss of generality let $\ket{\psi}=\sqrt{\lambda}\ket{00}+\sqrt{1-\lambda}\ket{11}$, with $\lambda\geq1/2$. To perform DFE of such state with LOCC measurements, we can measure a random Pauli on a random qubit, followed by measuring the other qubit along the projected state, i.e.
\begin{equation}
\frac{2\hat{M}_{z}+\hat{M}_{x1}+\hat{M}_{x2}+\hat{M}_{y1}+\hat{M}_{y2}}{6}=\frac{2\hat{M}_{\psi}+I}{3},
\end{equation}
where $\hat{M}_{z}=\projector{00}+\projector{11}$, $\hat{M}_{x1}=\projector{+}\otimes\left(\sqrt{\lambda}\ket{0}+\sqrt{1-\lambda}\ket{1}\right)\left(\sqrt{\lambda}\bra{0}+\sqrt{1-\lambda}\bra{1}\right)+\projector{-}\otimes\left(\sqrt{\lambda}\ket{0}-\sqrt{1-\lambda}\ket{1}\right)\left(\sqrt{\lambda}\bra{0}-\sqrt{1-\lambda}\bra{1}\right)$, and similar for $\hat{M}_{y1},\hat{M}_{x2},\hat{M}_{y2}$.
\end{document}